\documentclass[12pt]{article}

\usepackage{lmodern} %
\usepackage{graphicx}
\usepackage{xcolor}
\usepackage[margin=1.25in, top=1.5in, bottom=1.5in]{geometry}
\usepackage{amsmath}
\usepackage{amssymb}
\usepackage{dsfont} %
\usepackage{xparse} %
\usepackage{math}   %
\usepackage{amsthm} %
\usepackage{enumitem} %
\usepackage{appendix}
\usepackage{mathrsfs} %
\usepackage[hidelinks]{hyperref} %
\hypersetup{
  pdftitle={Cross-Fitting Under Nonregularity: Normality and Inference via Locality},
  pdfauthor={Bruno Fava},
  pdfsubject={}
}
\usepackage{cleveref} %
\usepackage{booktabs} %
\usepackage{array} %
\usepackage{threeparttable} %
\usepackage[bottom]{footmisc} %

\newtheorem{lemma}{Lemma}

\newtheorem{theoreminner}{Theorem}

\newenvironment{theorem}[1][]{
  \begin{theoreminner}[#1]
}{
  \end{theoreminner}
}

\newtheorem{propositioninner}{Proposition}

\newtheorem{exampleinner}{Example}

\newenvironment{example}[1][]{
  \begin{exampleinner}[#1]
}{
  \end{exampleinner}
}

\NewDocumentEnvironment{RevExample}{m}{
    \begingroup
    \renewcommand{\theexampleinner}{\ref{#1}}%
    \begin{example}[Revisited]%
}{%
    \end{example}%
    \endgroup
}

\newtheorem{assumptioninner}{Assumption}

\newenvironment{assumption}[1][]{
  \begin{assumptioninner}[#1]
}{
  \end{assumptioninner}
}

\newtheorem{remarkinner}{Remark}

\newenvironment{remark}[1][]{
  \begin{remarkinner}[#1]
}{
  \end{remarkinner}
}

\newcommand{\appendixnumbering}{%
  \counterwithin{equation}{section}%
  \counterwithin{theoreminner}{section}%
  \counterwithin{propositioninner}{section}%
  \counterwithin{exampleinner}{section}%
  \counterwithin{lemma}{section}%
  \counterwithin{corollary}{section}%
  \counterwithin{remarkinner}{section}%
  \counterwithin{definition}{section}%
  \counterwithin{assumptioninner}{section}%
}

\crefname{equation}{}{}
\Crefname{equation}{}{}
\crefname{theoreminner}{Theorem}{Theorems}
\Crefname{theoreminner}{Theorem}{Theorems}
\crefname{propositioninner}{Proposition}{Propositions}
\Crefname{propositioninner}{Proposition}{Propositions}
\crefname{exampleinner}{Example}{Examples}
\Crefname{exampleinner}{Example}{Examples}
\crefname{assumptioninner}{Assumption}{Assumptions}
\Crefname{assumptioninner}{Assumption}{Assumptions}
\crefname{remarkinner}{Remark}{Remarks}
\Crefname{remarkinner}{Remark}{Remarks}
\crefname{appendix}{Appendix}{Appendices}
\Crefname{appendix}{Appendix}{Appendices}

\newlist{assumptionenum}{enumerate}{1}
\setlist[assumptionenum]{label=\upshape(\roman*), ref=\theassumptioninner(\roman*)}
\crefname{assumptionenumi}{Assumption}{Assumptions}
\Crefname{assumptionenumi}{Assumption}{Assumptions}

\newcommand{\rmd}{\mathrm{d}}

\newcommand{\etahat}{{\hat{\eta}}}

\newcommand{\thetahat}{\hat{\theta}}
\newcommand{\betahat}{{\hat{\beta}}}

\newcommand{\muhat}{{\hat{\mu}}}
\newcommand{\nuhat}{{\hat{\nu}}}
\newcommand{\sigmahat}{\hat{\sigma}}

\newcommand{\CIalpha}{\widehat{\mathrm{CI}}_{\alpha}}

\newcommand{\s}{\mathsf{s}}
\newcommand{\stilde}{{\tilde{\mathsf{s}}}}

\usepackage{natbib}
\title{\vspace{-65pt}Cross-Fitting Under Nonregularity:\\Normality and Inference via Locality\vspace{1.5em}}
\author{\Large Bruno Fava\thanks{Contact: bfava@stanford.edu. I am really grateful to Stefan Wager for numerous very helpful conversations. 
This research was supported in part through the computational resources and staff contributions provided for the Quest high performance computing facility at Northwestern University which is jointly supported by the Office of the Provost, the Office for Research, and Northwestern University Information Technology.}\\[0.5em]
\large Stanford University}
\date{\vspace{1.2em}\today}

\begin{document}

\maketitle

\begin{abstract}
Cross-fitting is routine in much of applied research. %
While conventional confidence intervals that ignore cross-fold dependence are asymptotically valid in several settings, they undercover in many applications that share a common form of nonregularity: from the classic cross-validation problem of testing whether a fitted model outperforms another, to testing for heterogeneous treatment effects with machine learning, to estimating the value of a potentially non-unique optimal treatment regime. 
Exploiting a new locality condition, I show that a large class of cross-fitting estimators still satisfies a central limit theorem despite the nonregularity, but with an asymptotic variance that must be adjusted for the cross-fold correlation. 
Then, I propose a method for estimating this correlation and construct new confidence intervals that attain asymptotically nominal coverage. 
Finally, I show that the proposed confidence intervals attain approximately nominal coverage in a simulation study with random forests and neural networks. 
\end{abstract}

\clearpage

\section{Introduction}
\label{sec.intro}

Cross-fitting is widely popular in applied research. 
By using every observation both for training and for testing, it improves efficiency over sample-splitting, but doing so introduces statistical dependence across folds that can make inference challenging. 
In many settings, including semiparametric estimation and debiased machine learning, this dependence is asymptotically negligible, and confidence intervals that treat folds as independent have approximately nominal coverage \citep{schickAsymptoticallyEfficientEstimation1986,hubbardStatisticalInferenceData2016b,chernozhukovDoubleDebiasedMachine2018b,neweyCrossFittingFastRemainder2018}. 
In several important applications, however, a form of nonregularity keeps the dependence from vanishing, and conventional confidence intervals undercover: from the classic cross-validation problem of comparing the predictive performance of fitted models \citep{stoneAsymptoticsCrossvalidation1977,efronImprovementsCrossValidation632+1997,bayleRelativeInstabilityModel2026}, to testing for treatment effect heterogeneity \citep{chernozhukovFisherSchultzLecture2025a,wagerCommentFisherSchultzLecture2025a}, to learning the value of a potentially non-unique optimal treatment regime \citep{robinsOptimalStructuralNested2004a,luedtkeStatisticalInferenceMean2016}. 

I revisit the asymptotics of cross-fitting and make two contributions. 
First, exploiting a new locality argument, I show that a large class of split-sample estimators still satisfies a central limit theorem despite the nonregularity, but with an asymptotic variance that depends on the cross-fold correlation.  
Second, I propose a method for estimating this correlation and constructing confidence intervals based on cross-fitting that attain asymptotically nominal coverage in many nonregular settings.

To introduce the problem I study, and previous results, consider a simple example: estimating the conditional predictive performance with 2-fold cross-validation \citep{efronEstimatingErrorRate1983,picardCrossValidationRegressionModels1984,efronImprovementsCrossValidation632+1997,blumBeatingHoldoutBounds1999,dudoitAsymptoticsCrossvalidatedRisk2005a,yangComparingLearningMethods2006,goldsteinTestingRelativePerformance2016,bayleCrossvalidationConfidenceIntervals2020,leiCrossValidationConfidence2020,austernAsymptoticsCrossvalidation2025,luoLimitsAssumptionfreeTests2025}, a routine exercise in applications such as clinical prediction \citep{jiangCalculatingConfidenceIntervals2008,pirracchioMortalityPredictionIntensive2015}. 
Given an i.i.d. sample of an outcome $Y$ and covariates $X$, $(Y_i, X_i) \sim P$ for $i = 1, \dots, n$, an analyst randomly splits $\bc{1,\dots,n}$ into two folds $\s_1$ and $\s_2$ of equal size, and uses a learning algorithm, such as a random forest, to train a model $\muhat_k$ to predict $Y$ from $X$ on each fold $\s_k$, $k = 1, 2$.
The analyst then evaluates each predictor on the other fold, and averages the two estimates: 
$$\wh{\mathrm{Err}} = \frac{1}{2} \bp{ \wh{\mathrm{Err}}_1 + \wh{\mathrm{Err}}_2 }, \qquad \wh{\mathrm{Err}}_k = \frac{1}{n/2} \sum_{i \notin \s_k} L\of{Y_i, \muhat_k(X_i)},$$
where $L$ is a loss function, such as $L(y, \hat{y}) = (y - \hat{y})^2$.
A natural and popular target is the out-of-sample loss of the fitted predictors \citep{efronEstimatingErrorRate1983,dudoitAsymptoticsCrossvalidatedRisk2005a},
$$\mathrm{Err} = \frac{1}{2} \bp{ \mathrm{Err}_1 + \mathrm{Err}_2 }, \qquad \mathrm{Err}_k = \E{ L\of{\tilde{Y}, \muhat_k(\tilde{X})} \;\middle|\; \bc{Y_i, X_i}_{i \in \s_k} },$$
where $(\tilde{Y}, \tilde{X}) \sim P$ independent of the sample. 

Conventional inference on $\mathrm{Err}$ uses a concentration argument that renders cross-fold dependence negligible. 
If $L\of{Y_i, \muhat_k(X_i)} = \bar{L}_n\of{Y_i, X_i} + o_p(1)$, with $\bar{L}_n\of{y, x} = \E{L\of{y, \muhat_k(x)}}$, then, under standard regularity conditions, 
\begin{equation} \label{eq.cv_concentration}
    \sqrt{n/2} \bp{\wh{\mathrm{Err}}_k - \mathrm{Err}_k} = \frac{1}{\sqrt{n/2}} \sum_{i \notin \s_k} \bc{\bar{L}_n\of{Y_i, X_i} - \E{\bar{L}_n\of{\tilde{Y}, \tilde{X}}}} + o_p(1),
\end{equation}
cross-fold dependence is first-order negligible, $\wh{\mathrm{Err}}$ is approximately an average of i.i.d. terms, and $\sqrt{n} (\wh{\mathrm{Err}} - \mathrm{Err})$ is asymptotically normal. 
This is the core mechanism in much of the cross-validation literature with random centering \citep{bayleCrossvalidationConfidenceIntervals2020,austernAsymptoticsCrossvalidation2025,leiModernTheoryCrossValidation2025}, which often verifies concentration around $\bar{L}_n$ through algorithmic stability, and is very close to the arguments in \citet{hubbardStatisticalInferenceData2016b} and \citet{favaTrainingTestingMultiple2025}. 

In practice, however, $\mathrm{Err}$ is usually not of interest on its own, but in comparison with the error of another model or a benchmark. 
Let $\nuhat_k$ denote another model trained on fold $\s_k$ using a different learner, such as a neural network or the sample average $\nuhat_k(x) = (n/2)^{-1} \sum_{i \in \s_k} Y_i$. 
In this case, the cross-validation estimator is 
$$\wh{\Delta} = \frac{1}{2} \bp{ \wh{\Delta}_1 + \wh{\Delta}_2 }, \qquad \wh{\Delta}_k = \frac{1}{n/2} \sum_{i \notin \s_k} \wh{\delta}_k\of{Y_i, X_i},$$
$$\wh{\delta}_k(y, x) = L\of{y, \muhat_k(x)} - L\of{y, \nuhat_k(x)},$$
and the estimand is 
$$\Delta = \frac{1}{2} \bp{\Delta_1 + \Delta_2}, \qquad \Delta_k = \E{ \wh{\delta}_k\of{\tilde{Y}, \tilde{X}} \;\middle|\; \bc{Y_i, X_i}_{i \in \s_k} }.$$

In this setting, the concentration argument in \cref{eq.cv_concentration} does not immediately apply, especially under the leading null hypothesis $\Delta = 0$. 
If both $\muhat_k$ and $\nuhat_k$ converge to the same limit, $\wh{\delta}_k(y, x) = o_p(1)$, and hence 
$\sqrt{n} ({\wh{\Delta} - \Delta}) = o_p(1)$, which is uninformative for inference. 
A more useful result would require a deterministic function $\bar{\delta}_n(y,x) = \mathbb{E}[\wh{\delta}_k(y,x)]$ such that 
\begin{equation} \label{eq.cv_degenerate}
    \frac{\wh{\delta}_k(Y_i, X_i)}{\sigma_n} = \frac{\bar{\delta}_n(Y_i, X_i)}{\sigma_n} + o_p(1),
\end{equation}
with $\sigma_n^2 = \operatorname{Var}[\wh{\delta}_k(\tilde{Y}, \tilde{X})]$, $\sigma_n^2 \to 0$. 
This requirement is counterintuitive: if for example $E[Y|X]$ is constant, the difference in performances $\wh{\delta}_k(Y_i, X_i)$ is pure sampling noise, and it would not be unnatural that $\bar{\delta}_n(y,x) = 0$. 

The failure of \cref{eq.cv_degenerate} is a known form of nonregularity \citep{robinsOptimalStructuralNested2004a,laberAdaptiveConfidenceIntervals2011a}, previously addressed with procedures other than cross-fitting. 
\citet{yadlowskyEvaluatingTreatmentPrioritization2025b} use a single train/holdout split. 
Sequential schemes evaluate each observation with a rule trained only on earlier data \citep{luedtkeStatisticalInferenceMean2016,wagerCommentFisherSchultzLecture2025a}, and hence train and evaluate models on less data than cross-fitting does. 
Repeated sample-splitting combined with rules such as the median or twice the median of $p$-values, or majority votes over intervals, controls size \citep{vandewielTestingPredictionError2009,diciccioExactTestsMultiple2020,gasparinMergingUncertaintySets2024,gasparinCombiningExchangeableValues2025,chernozhukovFisherSchultzLecture2025a}, as do confidence intervals that inflate the standard error or add noise \citep{rinaldoBootstrappingSampleSplitting2019,daiSignificanceTestsFeature2024,verdinelliDecorrelatedVariableImportance2024,bayleRelativeInstabilityModel2026} and the multivariate approach of \citet{favaTrainingTestingMultiple2025}, but all of these procedures are conservative and/or sacrifice power.

I focus instead on cross-fitting itself, and construct confidence intervals that attain asymptotically nominal coverage by exploiting a novel locality condition. 
My first contribution is a central limit theorem (CLT) that accommodates this nonregularity. 
My CLT applies to a broad class of split-sample estimators, covering $K$-fold and leave-one-out cross-fitting. 
Unlike in previous cross-validation CLTs, the asymptotic variance in my result depends on the correlation between fold-level estimators, which turns out to be positive in simulations, so that omitting it would lead to undercoverage. 

My CLT builds on the intuition that the average of dependent random variables can still be asymptotically normal as long as the dependence structure is ``local''. 
In the model comparison example, 
$$\frac{\wh{\Delta}}{\sigma_n} = \frac{1}{n} \bp{\sum_{i \in \s_1} \frac{\wh{\delta}_2\of{Y_i, X_i}}{\sigma_n} + \sum_{j \in \s_2} \frac{\wh{\delta}_1\of{Y_j, X_j}}{\sigma_n}},$$
and the threat to normality is the statistical dependence between $\wh{\delta}_2(Y_i, X_i)$ and $\wh{\delta}_1(Y_j, X_j)$ for $i \in \s_1$ and $ j \in \s_2$. 
Dependence, however, need not invalidate normality. 
Indeed, there is a long tradition of CLTs for dependent data. 
In the time series literature, for example, it is typical to control dependence by assuming that two observations are ``nearly independent'' if they are ``far apart'' in time. 
I formalize the notion of locality in the cross-fitting setting: changing an observation $(Y_i, X_i)$ with $i \in \s_1$ can affect all of $\{\wh{\delta}_1(Y_j, X_j)\}_{j \in \s_2}$, but this effect must be ``small'' if $(Y_i, X_i)$ is ``far'' from $(Y_j, X_j)$. 
This is natural for nonparametric learners: predictions $\muhat_1(X_j)$ are more sensitive to a training point $(Y_i, X_i)$ when $X_j$ is close to $X_i$. 
But the locality condition is not universal: for parametric learners, changing one training point can affect all test points equally.

My CLT paves the way for inference, but raises a new challenge: estimating the cross-fold correlation. 
My second contribution is to address this problem. 
I propose a resampling-based estimator that applies to any black-box learner and many estimators, achieving nominal coverage asymptotically under mild regularity conditions.

\Cref{sec.setup} lays out the setting and notation, and \Cref{sec.clt_2cf} derives the central limit theorem. 
\Cref{sec.inference} is devoted to inference and feasible confidence intervals, and \Cref{sec.simulations} contains a Monte Carlo study supporting the theoretical results. 
\Cref{sec.conclusion} concludes. 

The main text focuses on two-fold cross-fitting to simplify the exposition. 
In \Cref{app.clt_general}, I extend the result to $K$-fold cross-fitting, including leave-one-out. 
All proofs are deferred to the appendix.

\section{Setting and Notation}
\label{sec.setup}

I define the setting and notation, introduce the cross-fitting estimator and its estimand, and present three applications that fit into this framework, including the cross-validation model comparison problem discussed in the introduction. 

Consider an i.i.d. sample $\Dc = \{W_i\}_{i=1}^n$ of even size $n$, with $W_i \sim P$ taking values in $\Wc$. 
For example, $W = (Y, X)$ for an outcome $Y$ and covariates $X$. 
Further, consider a random split of the indices $\bs{n} = \bc{1,\dots,n}$ into two disjoint folds $\s_1,\s_2$ of equal size, independently of the data.
Without loss of generality, I treat the split as fixed.
For any subset $\s \subseteq \bs{n}$, let $\Dc_\s = \bc{W_i : i \in \s}$ denote the corresponding subsample and $\stilde = \bs{n} \setminus \s$ the complement of $\s$.

Let $\Ac$ denote a training algorithm, a measurable map $\Ac : \bigcup_{m \ge 1} \Wc^m \to \Hc$, invariant under permutations of its arguments, that takes a training sample $\Dc$ of any size and outputs a model $\Ac(\Dc) \in \Hc$. 
Moreover, let $\bc{f_\eta : \eta \in \Hc}$ be a known family of functions $f_\eta : \Wc \to \Rb$, and let $\etahat_k = \Ac(\Dc_{\s_k})$ be the model trained on the $k$-th fold. 

I study the cross-fitting estimator 
$$\thetahat_\etahat = \frac{1}{2} \sum_{k=1}^{2} \thetahat_{\etahat_{k}}, \qquad \thetahat_{\etahat_{k}} = \frac{1}{n/2} \sum_{i \notin \s_{k}} f_{{\etahat}_{k}}(W_i),$$
where $\etahat = (\etahat_1, \etahat_2)$. 
The estimand is 
$$\theta_\etahat = \frac{1}{2} \sum_{k=1}^{2} \theta_{\etahat_k}, \qquad \theta_{\etahat_k} = P f_{\etahat_{k}} = \int f_{\etahat_{k}}(w) \rmd P(w).$$
For example, if $f_\eta(w) = (y - \eta(x))^2$, $\theta_\etahat$ is the out-of-sample mean squared error of a predictor that selects one of the two fitted models uniformly at random.

The three examples below fit into this setting and exhibit the nonregularity that motivates this paper.
\begin{example}[Model comparison with cross-validation \citep{blumBeatingHoldoutBounds1999,dudoitAsymptoticsCrossvalidatedRisk2005a,bayleCrossvalidationConfidenceIntervals2020,austernAsymptoticsCrossvalidation2025,favaTrainingTestingMultiple2025,bayleRelativeInstabilityModel2026}] \label{ex.setup.cv}
\,

Let $W = (Y,X)$, with outcome $Y$ and covariates $X$, and let $\eta = (\mu, \nu)$ be a pair of predictors of $Y$ from $X$. 
The fitted model $\etahat_k = (\muhat_k, \nuhat_k)$ then consists of two predictors trained on fold $\s_k$ by different learners, for example a random forest and a neural network. 
The score is the loss difference 
$$f_{\eta}(y,x) = L(y, \mu(x)) - L(y, \nu(x)),$$
where $L$ is a loss function, such as the squared error $L(y,\hat{y}) = (y - \hat{y})^2$ or the classification error $L(y,\hat{y}) = \ind{y \neq \hat{y}}$ for discrete outcomes. 

Each 
$$\theta_{\etahat_k} = \int \bs{L(y, \muhat_k(x)) - L(y, \nuhat_k(x))} \rmd P(y,x)$$
is the difference in out-of-sample loss between the two fitted models $\muhat_k$ and $\nuhat_k$, and the estimand 
$$\theta_\etahat = \frac{1}{2} \bp{\theta_{\etahat_1} + \theta_{\etahat_2}}$$
is its average over the two folds, or, equivalently, the out-of-sample loss difference when $(\muhat_k, \nuhat_k)$ is drawn uniformly at random from the two fitted pairs. 
\end{example}

\begin{example}[Value of a learned treatment rule \citep{luedtkeStatisticalInferenceMean2016,shiBreakingCurseNonregularity2020,whitehouseInferenceOptimalPolicy2026}]
\label{ex.setup.otr}
\,

Let $W = (Y, A, X)$, with outcome $Y$, binary treatment assignment $A$, and covariates $X$, in a randomized experiment with known treatment probability $\P{A = 1 \mid X} = \pi \in (0,1)$.
Moreover, let $Y(0), Y(1)$ denote potential outcomes with $Y = Y(A)$, and let $\eta : \Xc \to \bc{0,1}$ be a treatment rule, so that $\etahat_k$ is a rule learned on fold $\s_k$, for example by empirical welfare maximization \citep{kitagawaWhoShouldBe2018} or by thresholding an estimate of the conditional average treatment effect $\tau(x) = \E{Y(1) - Y(0) \mid X = x}$.

The value of a rule is $V(\eta) = \E{Y(\eta(X))}$, and $V(\eta) = P f_\eta$ for
$$f_\eta(y, a, x) = \frac{a y}{\pi} \eta(x) + \frac{(1-a) y}{1 - \pi} \bc{1 - \eta(x)}.$$
Each $\theta_{\etahat_k} = V(\etahat_k)$ is the value of the rule learned on fold $\s_k$, and the estimand $\theta_\etahat$ is the average value of the two learned rules.

The value of the learned treatment regimes is also informative for the value of the optimal treatment regime. 
The learned rules are dominated by the optimal rule $\eta^*(x) = \ind{\tau(x) > 0}$, so $\theta_\etahat \le V(\eta^*)$, and a lower confidence bound for $\theta_\etahat$ is also a lower confidence bound for the optimal value.
Moreover, inference on $\theta_\etahat$ is asymptotically equivalent to inference on $V(\eta^*)$ whenever the regret $V(\eta^*) - \theta_\etahat = o_p(n^{-1/2})$, which \citet{luedtkeStatisticalInferenceMean2016} establish under margin and rate conditions, without requiring $\etahat_k$ to converge. 
\end{example}

\begin{example}[Best linear predictor (BLP) test of predictive power \citep{chernozhukovFisherSchultzLecture2025a,wagerCommentFisherSchultzLecture2025a}]
\label{ex.setup.blp}
\,

Let $W = (Y, X)$, and let $\eta : \Xc \to \Rb$ be a predictor of $Y$ from $X$.
A natural way to assess the predictive power and calibration of $\etahat_k$ is to regress $Y$ on $\etahat_k(X)$ and a constant in the test fold $\stilde_k$. 
A slope of zero means no predictive power, and a slope of one means good calibration. 

Holding $\etahat_k$ fixed, the population slope is 
$$\theta_{\etahat_k} = Pf_{\etahat_k}, \qquad f_\eta(y, x) = \frac{(y - \E{Y})\bc{\eta(x) - P\eta}}{P(\eta - P\eta)^2}.$$ 
The ordinary least squares estimator differs from $\thetahat_{\etahat_k}$ due to the oracle scale $P(\etahat_k - P\etahat_k)^2$ and the oracle centerings $\E{Y}$ and $P\eta$, but this distinction is typically first-order negligible since the sample variance of $\etahat_k(X)$ is ratio-consistent for $P(\etahat_k - P\etahat_k)^2$ under mild moment conditions. 
Under the null that $\E{Y \mid X}$ is constant, if $\etahat_k$ is consistent for this constant, $P(\etahat_k - P\etahat_k)^2 \pto 0$, and the score diverges. 
The standardized score $(f_{\etahat_k} - P f_{\etahat_k}) / \operatorname{Var}[{f_{\etahat_k}(W)}]^{1/2}$ may then have no deterministic limit, which is the nonregularity my CLT addresses.

The same construction tests for treatment effect heterogeneity in a randomized controlled trial by replacing $Y$ with its Horvitz--Thompson transform, which yields the best linear predictor of \citet[Theorem~3.2]{chernozhukovFisherSchultzLecture2025a}. 
\end{example}

\section{A CLT for Two-Fold Cross-Fitting}
\label{sec.clt_2cf}

I establish a central limit theorem for two-fold cross-fitting that accommodates the examples of the previous section. 
First, I discuss the required assumptions of the CLT, including a novel locality condition. 
Then, I establish a qualitative CLT, followed by a quantitative result that establishes convergence to normality at the i.i.d. rate $n^{-1/2}$ in the Wasserstein metric under stronger conditions. 

To simplify exposition, I focus on data-generating processes (DGPs) where 
$$P f_{\eta}=0 \qquad \text{and} \qquad P f_{\eta}^2 = 1$$ 
for every $\eta \in \Hc$. 
Alternatively, the general result follows from replacing each score $f_{\etahat_k}(w)$ by $(f_{\etahat_k}(w) - P f_{\etahat_k}) / \sigma_{\etahat_k}$, with $\sigma^2_{\eta} = P(f_\eta - P f_\eta)^2$. 
While this normalization is infeasible, it is a natural target for a CLT, since the centering reflects the estimand of interest $\theta_\etahat$, and $\sigma^2_{\etahat_k}$ can often be estimated consistently from the test sample $\stilde_k$, which I address when deriving feasible standard errors. 
Importantly, this normalization allows the variance of the unstandardized score $\sigma^2_{\etahat_k}$ to converge to zero in probability, requiring only that it be almost surely positive for every $n$. 

This section's main result is a new CLT establishing 
$$T_n = \frac{\sqrt{n} \thetahat_\etahat}{\sqrt{1 + \rho_n}} \leadsto \Nc \of{0, 1},$$
with 
$$\rho_n = \Corr{\thetahat_{\etahat_1}, \thetahat_{\etahat_2}} > -1.$$
I derive feasible standard errors in \Cref{sec.inference}.

I collect in \Cref{ass.clt_2cf} the conditions I use for the CLT. 
I use the convention, without loss of generality, that $1,2 \in \s_1$. 
The assumption uses three further definitions. 
First, define the mean-square stability coefficient 
$$\gamma_n = \E{\bc{f_{\etahat_1}(W) - f_{\etahat'_1}(W)}^2 },$$
where $W \sim P$ independent of the data, $\Dc'_{\s_1} = \bc{W'_1} \cup \bc{W_i : i \in \s_1 \setminus \bc{1}}$ is the training sample with $W_1$ replaced by an independent draw $W'_1 \sim P$, and $\etahat'_1 = \Ac(\Dc'_{\s_1})$ is the corresponding refitted model.
$\gamma_n$ is the mean-square stability coefficient studied by \citet{kaleCrossValidationMeanSquareStability2011} and \citet{bayleCrossvalidationConfidenceIntervals2020}, applied here to the centered and conditionally normalized score.
Second, define the local covariance coefficients 
$$r_i = \frac{n}{2} \Cov{\sqrt{\frac{n}{2}} \thetahat_{\etahat_1}, \sqrt{\frac{n}{2}} \thetahat_{\etahat_2} \;\middle|\; \Dc_{-i} }, \qquad i=1,\dots,n,$$
where $\Dc_{-i} = \bc{W_j : j \neq i}$ contains all observations except $W_i$. 
Each coefficient measures the covariance between the fold estimators when only observation $i$ varies, isolating the interaction between using $W_i$ to train $\etahat_1$ and to evaluate $\thetahat_{\etahat_2}$, when $i \in \s_1$. 
Note that this normalization makes $r_i$ a local analogue of the cross-fold correlation, since 
$$\E{r_i} = \rho_n.$$
Moreover, for a random variable $X$, write
$$
\norm{X}_p
=
\begin{cases}
\E{\abs{X}^p}^{1/p}, & 1 \le p < \infty,\\
\inf\bc{M \ge 0 : \P{\abs{X} \le M} = 1}, & p = \infty.
\end{cases}
$$

\begin{assumption}%
\label{ass.clt_2cf}
\,
For some $\bar{c} < \infty$, $\ubar{c} > 0$, $p \in (2, \infty]$, with the convention $1/p = 0$ when $p = \infty$, and sufficiently large $n$, the following conditions hold:
\begin{assumptionenum}
    \item \label{ass.clt_2cf.moments} (Moments). 
    $$\norm{f_{\etahat_1}(W)}_p \le \bar{c};$$
    \item \label{ass.clt_2cf.degeneracy} (Non-degeneracy). 
    $$\rho_n \ge -1 + \ubar{c};$$
    \item \label{ass.clt_2cf.stability} (Mean-square stability). 
    $$n^{\frac{1}{2(1-1/p)}} \gamma_n \to 0;$$
    \item \label{ass.clt_2cf.locality} (Locality).
    $$\Cov{r_1, r_2} \to 0.$$
\end{assumptionenum}
\end{assumption}

\begin{lemma}[Sufficient Conditions]
\label{lem.clt_2cf.sufficient}
    When $\Var{r_1} > 0$, 
    $$\abs{\Cov{r_1, r_2}} \le n \gamma_n \abs{\Corr{r_1, r_2}}.$$
    In particular, \Cref{ass.clt_2cf.stability,ass.clt_2cf.locality} hold under any of the following:
    \begin{enumerate}[label=\upshape(\alph*)]
        \item %
        $n \gamma_n = O(1), \Corr{r_1, r_2} = o(1)$;
        \item %
        $p = \infty, \sqrt{n} \, \gamma_n = o(1), \sqrt{n} \, \Corr{r_1, r_2} = O(1)$;
        \item %
        $n \gamma_n = o(1)$.
    \end{enumerate}
\end{lemma}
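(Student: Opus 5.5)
The inequality reduces to bounding $\Var{r_1}$: write $\Cov{r_1, r_2} = \Corr{r_1, r_2}\,\Var{r_1}^{1/2}\Var{r_2}^{1/2}$ and use exchangeability of the observations within $\s_1$ (the convention $1,2 \in \s_1$) to get $\Var{r_1} = \Var{r_2}$. Then $\abs{\Cov{r_1,r_2}} = \abs{\Corr{r_1,r_2}}\,\Var{r_1}$, and it suffices to show $\Var{r_1} \le \E{r_1^2} \le n\gamma_n$. The three sufficient conditions then follow by bookkeeping.

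\textbf{Simplifying $r_1$.} Since $1 \in \s_1$, conditionally on $\Dc_{-1}$ the model $\etahat_2$ is fixed, and the only random parts of the fold estimators are:
\begin{itemize}
\item $\thetahat_{\etahat_1} = (2/n)\sum_{j \in \s_2} f_{\etahat_1}(W_j)$, which moves only through $\etahat_1$;
\item the single term $(2/n) f_{\etahat_2}(W_1)$ in $\thetahat_{\etahat_2}$.
\end{itemize}
Tracking the normalizations gives
$$r_1 = \Cov{A, B \mid \Dc_{-1}}, \qquad A = \sum_{j \in \s_2} f_{\etahat_1}(W_j), \qquad B = f_{\etahat_2}(W_1).$$

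\textbf{Bounding $\E{r_1^2}$.} By conditional Cauchy--Schwarz, $r_1^2 \le \Var{A \mid \Dc_{-1}}\,\Var{B \mid \Dc_{-1}}$. The normalization $Pf_\eta = 0$, $Pf_\eta^2 = 1$ for all $\eta$ gives $\Var{B \mid \Dc_{-1}} = 1$. For $A$, use the independent-copy representation
$$\Var{A \mid \Dc_{-1}} = \tfrac12\, \E{(A - A')^2 \mid \Dc_{-1}},$$
where $A'$ uses $\etahat'_1$, trained with $W_1$ replaced by $W'_1$. Set $\Delta = f_{\etahat_1} - f_{\etahat'_1}$ and expand $\E{(\sum_{j\in\s_2} \Delta(W_j))^2}$:
\begin{itemize}
\item The diagonal terms contribute $(n/2)\gamma_n$, since each $W_j$, $j \in \s_2$, is independent of both training samples.
\item For $j \ne k$, conditioning on both training samples makes $W_j$ and $W_k$ i.i.d. $P$, so the cross term equals $\E{(P\Delta)^2}$. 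This vanishes because $P f_\eta = 0$ for every $\eta$.
\end{itemize}
Hence $\E{r_1^2} \le \tfrac12 \cdot \tfrac n2 \gamma_n = n\gamma_n/4 \le n\gamma_n$. This step, in particular the vanishing of the cross terms, is where the centering assumption is essential.

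\textbf{The three cases.} Stability requires $n^{a}\gamma_n \to 0$ with $a = \frac{p}{2(p-1)}$, and $a < 1$ because $p > 2$. If $\Var{r_1} = 0$, then $\Cov{r_1,r_2} = 0$ and locality holds trivially, so only stability needs checking.
\begin{itemize}
\item[(a)] Stability follows from $n^{a}\gamma_n = n^{a-1}\,(n\gamma_n) \to 0$. Locality follows from the inequality, since $O(1)\cdot o(1) \to 0$.
\item[(b)] Here $p = \infty$, so $a = 1/2$ and stability is exactly $\sqrt n\,\gamma_n \to 0$. Locality follows from $n\gamma_n\abs{\Corr{r_1,r_2}} = (\sqrt n\,\gamma_n)(\sqrt n\,\abs{\Corr{r_1,r_2}}) = o(1)\,O(1)$.
\item[(c)] This is (a) with $\abs{\Corr{r_1,r_2}} \le 1$, which gives $n\gamma_n\abs{\Corr{r_1,r_2}} \le n\gamma_n \to 0$.
\end{itemize}
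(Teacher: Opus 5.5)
Your proof is correct and follows essentially the same route as the paper. Exchangeability within $\s_1$ gives $\operatorname{Cov}(r_1,r_2)=\operatorname{Var}(r_1)\operatorname{Corr}(r_1,r_2)$, and the bound $\operatorname{Var}(r_1)\le\mathbb{E}[r_1^2]\le n\gamma_n/4$ comes, exactly as in \Cref{lem.clt_2cf.stability_bounds}, from conditional Cauchy--Schwarz plus the replace-one variance identity with vanishing cross terms; the three cases are then the same exponent bookkeeping, including your handling of the degenerate case $\operatorname{Var}(r_1)=0$.
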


\Cref{ass.clt_2cf.moments} is a uniform $L^p$ moment bound for some $p > 2$, with $p = \infty$ for bounded scores. 
\Cref{ass.clt_2cf.degeneracy} rules out near-perfect negative correlation between the fold-level estimators. 
It holds, for example, whenever $\rho_n \ge 0$, as in all simulations of \Cref{sec.simulations}.  
\Cref{ass.clt_2cf.stability} is a mean-square stability condition, restricting the sensitivity of the fitted score to replacing one training observation. 
Its exponent $1/[2(1-1/p)]$ decreases from $1$ as $p \downarrow 2$ to $\frac{1}{2}$ at $p = \infty$, so that lighter tails permit a less stable learner. 
Related conditions have been used to establish asymptotic normality of cross-validation estimators of test error under the stronger requirement $n \gamma_n = o(1)$ \citep{bayleCrossvalidationConfidenceIntervals2020,austernAsymptoticsCrossvalidation2025,leiModernTheoryCrossValidation2025}, which implies $\rho_n \to 0$ (\Cref{remark.stability}). 
This condition can often be verified using algorithm-specific leave-one-out or replace-one stability bounds, available under suitable regularity and tuning conditions \citep[for a discussion, see, e.g.,][Section~6]{leiModernTheoryCrossValidation2025}. 

Finally, \Cref{ass.clt_2cf.locality} is a locality condition. 
It requires the dependence created by using the same data point $W_i$ to train model $\etahat_1$ and evaluate $f_{\etahat_{2}}(W_i)$ to be local across observations. 
Note that $r_i$ measures the dependence between $\thetahat_{\etahat_1}$ and $\thetahat_{\etahat_2}$ as only observation $W_i$ varies, isolating the interaction between using $W_i$ to train $\etahat_1$ and to evaluate $\thetahat_{\etahat_2}$, when $i \in \s_1$. 
Under the stability requirement $n \gamma_n = O(1)$, it suffices that $\Corr{r_1, r_2} \to 0$ at any rate (\Cref{lem.clt_2cf.sufficient}(a)). 
This condition is plausible for nonparametric learners, including machine learning algorithms, because changing a training observation mostly affects predictions near its own covariate value.
The coefficient $r_i$ is then driven by the test observations whose covariates lie close to those of $W_i$, so that $r_1$ and $r_2$ depend on nearly disjoint parts of the test fold and are approximately uncorrelated.
In parametric models, however, all observations act through the same few parameters, so the contributions $r_i$ can remain correlated, and asymptotic normality may fail. 
Note that $\E{r_i} = \rho_n$, and that \Cref{ass.clt_2cf.locality} allows $\rho_n$ to be non-zero asymptotically.

\Cref{lem.clt_2cf.sufficient} gives three sufficient conditions for \Cref{ass.clt_2cf.stability,ass.clt_2cf.locality}, trading off stability against locality. 
Since $\Var{r_1} \le n \gamma_n$, \Cref{ass.clt_2cf.locality} holds whenever $n \gamma_n \Corr{r_1, r_2} \to 0$. 
Under condition (a), stability alone keeps $n \gamma_n$ bounded, so the correlation may vanish at any rate. 
Condition (b) relaxes stability by strengthening locality: once the correlation vanishes at rate $n^{-1/2}$, stability is needed only at the $\sqrt{n}$ rate for bounded scores, $\sqrt{n} \gamma_n \to 0$. 
Condition (c) is the stability requirement of previous cross-validation CLTs, and it implies \Cref{ass.clt_2cf.locality} with no restriction on the correlation. 
Hence, my CLT nests the regime $n \gamma_n \to 0$, in which $\rho_n \to 0$. %

\begin{theorem}[Cross-Fitting Central Limit Theorem]
\label{thm.clt_2cf} 
Let \Cref{ass.clt_2cf} hold. 
Then, 
$$T_n \leadsto \Nc \of{0, 1}.$$
\end{theorem}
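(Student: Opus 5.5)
I would prove \Cref{thm.clt_2cf} with a martingale central limit theorem, working directly with the normalized statistic. Write $S_n = \sqrt{n}\,\thetahat_\etahat = (2n)^{-1/2}\sum_{k}\sum_{i \notin \s_k} f_{\etahat_k}(W_i)$. Since $Pf_\eta = 0$, we have $\E{S_n}=0$. Since $Pf_\eta^2=1$, each fold term has conditional variance $1$ given its training sample, so $\Var{S_n} = 1+\rho_n$.

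Order the observations $W_1,\dots,W_n$ and let $\Fc_i = \sigma(W_1,\dots,W_i)$. The Doob decomposition gives $S_n = \sum_{i=1}^n \xi_i$ with $\xi_i = \E{S_n\mid\Fc_i} - \E{S_n \mid \Fc_{i-1}}$. A standard martingale CLT then yields $T_n \leadsto \Nc(0,1)$ once two things hold:
\begin{enumerate}[label=\upshape(\roman*)]
\item a conditional Lindeberg condition, $\sum_i \E{\xi_i^2 \ind{|\xi_i|>\epsilon}} \to 0$;
\item convergence of the conditional variance, $(1+\rho_n)^{-1}\sum_i \E{\xi_i^2\mid \Fc_{i-1}} \pto 1$.
\end{enumerate}
\Cref{ass.clt_2cf.degeneracy} keeps the normalization $1+\rho_n$ bounded away from zero.

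\textbf{Step 1 (Lindeberg).} I would bound $\xi_i$ through the replace-one difference $S_n - S_n^{(i)}$, where $S_n^{(i)}$ replaces $W_i$ by an independent copy. For $i\in\s_1$ this difference has two parts. The first is the evaluation term $(2n)^{-1/2}\{f_{\etahat_2}(W_i)-f_{\etahat_2}(W_i')\}$, of size $n^{-1/2}$ in $L^p$ by \Cref{ass.clt_2cf.moments}. The second is the training term $(2n)^{-1/2}\sum_{j\in\s_2}\{f_{\etahat_1}(W_j)-f_{\etahat_1'}(W_j)\}$, whose second moment is of order $n\gamma_n$ plus a diagonal term.

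Lyapunov's bound with exponent $p$ handles the evaluation part. For the training part, I would interpolate between its $L^2$ size, governed by $\gamma_n$, and its $L^p$ size, governed by $\bar c$. The exponent $1/[2(1-1/p)]$ in \Cref{ass.clt_2cf.stability} should be exactly what makes $\sum_i \E{|\xi_i|^{2+\delta}}$, or a truncated second moment, vanish. I expect Hölder with conjugate exponents $p/2$ and $p/(p-2)$ to be the tool here.

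\textbf{Step 2 (variance).} I would show that $\sum_i \xi_i^2$, or the conditional version, concentrates around its mean $1+\rho_n$. Using the same replace-one representation, write $n\,\E{\xi_i^2\mid\Fc_{i-1}}$ as $\tfrac12$ (own evaluation variance, equal to $1$ up to the stability error) plus a cross term. That cross term is the covariance between $W_i$'s evaluation contribution to one fold and its training effect on the other, which is $r_i$ up to conditioning differences, plus a squared training term of order $n\gamma_n$. Hence $\sum_i \E{\xi_i^2\mid\Fc_{i-1}} \approx 1 + n^{-1}\sum_i r_i$ plus errors. Then $\Var{n^{-1}\sum_i r_i} = n^{-1}\Var{r_1} + (1-1/n)\Cov{r_1,r_2}$ up to fold-membership symmetry. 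The first piece is at most $n^{-1}\cdot n\gamma_n\to0$ by \Cref{lem.clt_2cf.sufficient}, and the second vanishes by \Cref{ass.clt_2cf.locality}. Chebyshev then gives $n^{-1}\sum_i r_i \pto \rho_n$ in the sense of a vanishing difference.

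\textbf{Main obstacle.} The hard part is Step 2: identifying the martingale conditional variances with the $r_i$. The $r_i$ condition on $\Dc_{-i}$, that is, on all other points, whereas $\E{\cdot\mid\Fc_{i-1}}$ conditions only on the past. My first attempt would sidestep this with an Efron--Stein/Chatterjee-type normal approximation, $\sum_i \E{\cdot\mid\Dc_{-i}}$, i.e. Stein's method with exchangeable-pair or replace-one perturbations. That reduces the required conditional-variance convergence to precisely $\operatorname{Var}(n^{-1}\sum_i r_i)\to0$. The remaining higher-order Stein remainder terms, of the form $\sum_i \E{|\Delta_i S_n|^3}$, would be controlled by the moment–stability interpolation of Step 1.
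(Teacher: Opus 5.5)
Your fallback is the right idea and is in fact the paper's route. The paper uses Stein's method with leave-one-out conditioning, $\Delta_i = T_n - \E{T_n \mid \Dc_{-i}}$. The exact identity $\sqrt{n(1+\rho_n)}\,\E{f_1\Delta_1 \mid \Dc_{-1}} = 1 + r_1$ yields the approximate Stein kernel $\Kc_n = \bigl(1 + \tfrac{2}{n}\sum_{i\in\s_1} r_i\bigr)/(1+\rho_n)$, whose variance is governed by $\Var{r_1}/n$ and $\Cov{r_1,r_2}$. The martingale route, as you suspected, never reduces to the $r_i$, since $\E{\xi_i^2 \mid \mathcal{F}_{i-1}}$ conditions only on the past.

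\textbf{The gap is in how you control the remainder.} The cubic term $\sum_i \E{\abs{\Delta_i S_n}^3}$ you propose, with $\Delta_i S_n = S_n - S_n^{(i)}$, is not controlled by \Cref{ass.clt_2cf}, and the Lyapunov sum in your Step 1 has the same problem. First, it needs $p \ge 3$, whereas only $p > 2$ is assumed. More seriously, for $i \in \s_1$ the influence contains the training part $b_i = n^{-1/2}\sum_{j\in\s_2}\{f_{\etahat_1}(W_j) - f_{\etahat_1^{(i)}}(W_j)\}$. Given the two fits, this is a normalized sum of i.i.d.\ centered terms with variance $\delta_i = P(f_{\etahat_1} - f_{\etahat_1^{(i)}})^2$, so conditional Jensen gives $\E{\abs{b_i}^3} \ge 2^{-3/2}\E{\delta_i^{3/2}}$. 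Nothing in the assumptions keeps $\E{\delta_i^{3/2}}$ below the order of $\E{\delta_i} = \gamma_n$; it reaches that order whenever a replacement occasionally moves the fit by an order-one amount. The cubic term is then of order $n\gamma_n$. \Cref{ass.clt_2cf.stability} allows this to stay bounded away from zero or even diverge, since for bounded scores it only requires $\sqrt{n}\,\gamma_n \to 0$. Interpolating between $L^2$ and $L^p$ does not help: the assumptions control $b_i$ in $L^2$ by $\sqrt{\gamma_n/2}$, but in higher norms only at order one. So your bound would in general need $n\gamma_n \to 0$. That is the regime of earlier CLTs, and it forces $\rho_n \to 0$ (\Cref{remark.stability}).

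\textbf{The missing idea is to expand summand by summand rather than through the total influence.} By exchangeability, $\E{T_n\varphi_h(T_n)} = \sqrt{n/(1+\rho_n)}\,\E{f_1\{\varphi_h(T_n) - \varphi_h(T_n - \Delta_1)\}}$, using $\E{f_1 \mid \Dc_{-1}} = 0$ and the $\Dc_{-1}$-measurability of $T_n - \Delta_1$. A first-order Taylor step, plus the shift from $\varphi_h'(T_n - \Delta_1)$ to $\varphi_h'(T_n)$, leaves a remainder bounded by $\sqrt{n/(1+\rho_n)}\,\bigl[\E{\abs{f_1}\min\{\Delta_1^2, 2\abs{\Delta_1}\}} + 2\E{\Delta_1^2}\bigr]$. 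Here the higher moment is carried by the evaluation score $f_1$, whose weight $n^{-1/2}$ is small even when $\Delta_1$ is not. The influence $\Delta_1$ enters only through its second moment, which is computed exactly: $\E{\Delta_1^2} = (1 + 2\rho_n + n\gamma_n/4)/\{n(1+\rho_n)\}$. H\"older with exponents $p$ and $p/(p-1)$, together with $(\abs{d}\min\{\abs{d},1\})^{p/(p-1)} \le d^2$, bounds the remainder by $C\sqrt{n}\,(1/n + \gamma_n)^{1-1/p}$. This is exactly what the exponent in \Cref{ass.clt_2cf.stability} makes vanish.

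Two smaller corrections. First, $\sqrt{n}\,\thetahat_\etahat = n^{-1/2}\sum_i f_i$, not $(2n)^{-1/2}\sum_i f_i$. Second, $\Var{n^{-1}\sum_{i=1}^n r_i}$ also contains the cross-fold covariance $\Cov{r_1, r_j}$, $j \in \s_2$, which is not assumed small. Either keep only $i \in \s_1$, which the reduction to $f_1$ does automatically, or use Minkowski, $\Var{(R_1 + R_2)/2} \le \Var{R_1}$, with $R_k$ the fold averages of the $r_i$.
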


\Cref{thm.clt_2cf} is, to my knowledge, the first to establish a central limit theorem for cross-fitting estimators that accommodates the nonregularity that the standardized influence functions may not converge to a fixed limit. 
Compared to previous CLTs in the cross-validation \citep{dudoitAsymptoticsCrossvalidatedRisk2005a,bayleCrossvalidationConfidenceIntervals2020,austernAsymptoticsCrossvalidation2025,leiModernTheoryCrossValidation2025} and semiparametric estimation/double machine learning \citep{schickAsymptoticallyEfficientEstimation1986,chernozhukovDoubleDebiasedMachine2018b} literatures, the asymptotic variance of $\sqrt{n} \thetahat_\etahat$ in my result depends on the cross-fold correlation, which may be nonzero. 
Note that the stronger stability condition $n \gamma_n = o(1)$, required by previous cross-validation CLTs, implies $\rho_n = o(1)$. 
Compared to those results, I weaken the stability requirement (\Cref{ass.clt_2cf.stability}), and introduce the locality \Cref{ass.clt_2cf.locality}. 

The next result strengthens \Cref{thm.clt_2cf}. 
Under stronger conditions, the normal approximation holds at the rate $n^{-1/2}$, the Berry--Esseen rate for averages of independent random variables with finite third moments. 
I state the result in the Wasserstein distance, defined for random variables $X$ and $Y$ by 
$$d_W\of{X, Y} = \sup_{h \in \Lip{1}} \, \bigl| \E{h(X)} - \E{h(Y)} \bigr|,$$
where 
$$\Lip{1} = \bc{h \colon \Rb \to \Rb \;:\; \abs{h(x) - h(y)} \le \abs{x - y} \text{ for all } x, y \in \Rb}.$$ 

\begin{theorem}[Quantitative Cross-Fitting Central Limit Theorem]
\label{thm.clt_2cf.quantitative}
Let \Cref{ass.clt_2cf} hold with $p \ge 3$.
Suppose, in addition, that
$$
n^{\frac{1}{1-1/p}}\gamma_n=O(1),
\qquad 
n \Cov{r_1, r_2}=O(1).
$$
Then,
$$
d_W\of{T_n, \Nc\of{0,1}}=O(n^{-1/2}).
$$
\end{theorem}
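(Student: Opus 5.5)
Since Theorem~\ref{thm.clt_2cf.quantitative} carries the same hypotheses as \Cref{thm.clt_2cf}, strengthened quantitatively, I will re-run the qualitative argument and make each error term explicit. The natural tool is Stein's method for the Wasserstein distance, combined with a Hoeffding/Efron--Stein type decomposition of $\sqrt{n}\thetahat_\etahat$.

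\emph{Step 1: a linear projection plus a controlled remainder.} Write $\sqrt{n}\thetahat_\etahat = \frac{1}{\sqrt{n}}\sum_{i=1}^n \xi_i$, where $\xi_i = f_{\etahat_2}(W_i)$ for $i\in\s_1$ and symmetrically for $\s_2$. Each $\xi_i$ depends on $W_i$ and on the whole opposite fold, so it is not independent across $i$. I will decompose $\sqrt{n}\thetahat_\etahat = S_n + R_n$. Here $S_n = n^{-1/2}\sum_i g_n(W_i)$ is the H\'ajek projection, with $g_n(w)=\E{\sqrt{n}\thetahat_\etahat \mid W_i = w}\sqrt{n}$ up to normalization. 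The remainder $R_n$ collects the interaction terms.

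\emph{Step 2: approximate by the exchangeable-pair/leave-one-out construction.} Rather than projecting, I will use the generalized perturbative (``replace-one'') approach of Chatterjee for functions of independent variables. Let $F=\sqrt{n}\thetahat_\etahat/\sqrt{1+\rho_n}$, $\Delta_i F = F - F^{(i)}$ with $W_i$ replaced by $W_i'$, and
\[
T=\tfrac12\sum_{A\subsetneq[n]}\frac{1}{\binom{n}{|A|}(n-|A|)}\sum_{i\notin A}\Delta_iF\,\Delta_iF^{A}.
\]
Chatterjee's bound then gives
\[
d_W(F,\Nc(0,1)) \le \sqrt{\Var{\E{T\mid F}}}\cdot C + C\sum_i\E{|\Delta_iF|^3}.
\]
The mean condition $\E{T}=\Var F = 1$ holds by construction, and this is exactly where the $1+\rho_n$ normalization enters.

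\emph{Step 3: bound the two terms.} The third-moment term requires $\E{|\Delta_iF|^3}=O(n^{-3/2})$. The difference $\Delta_iF$ has a direct part $n^{-1/2}(f_{\etahat_2}(W_i)-f_{\etahat_2}(W_i'))$, controlled by the $L^p$ moment with $p\ge 3$. It also has an indirect part $n^{-1/2}\sum_{j\in\s_2}(f_{\etahat_1}(W_j)-f_{\etahat_1^{(i)}}(W_j))$. I will bound the indirect part by splitting it into its conditional mean, which is the local covariance piece, and its fluctuation, which is controlled via $\gamma_n$ and H\"older with exponent $p$. The condition $n^{1/(1-1/p)}\gamma_n=O(1)$ is designed precisely so that H\"older interpolation of $\gamma_n$ against the $L^p$ bound yields $O(n^{-3/2})$ after summing over $i$.

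The variance term $\Var{T}$ reduces, after expanding, to covariances between $\Delta_iF\Delta_iF^A$ for different $i$. The diagonal contributions are $O(1/n)$ by moments. The cross contributions are governed by $\Cov{r_i,r_j}$, since conditional on $\Dc_{-i}$ the expected product $\Delta_iF\,\Delta_iF^A$ essentially equals $2r_i/n$ (up to stability errors). Summing $n^2$ pairs weighted by $n^{-2}$ gives a term of order $\Cov{r_1,r_2}+O(\gamma_n$-terms$)$. The assumption $n\Cov{r_1,r_2}=O(1)$ then makes $\Var T=O(1/n)$, so its square root is $O(n^{-1/2})$.

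\emph{Main obstacle.} The hard step is identifying $\E{\Delta_iF\,\Delta_iF^A\mid\cdot}$ with $r_i$ and controlling all the error terms uniformly over the subsets $A$. Perturbing sets $A$ of other training points changes $\etahat_1$, and this change must be shown to be second order via iterated replace-one stability. I would first handle this by a telescoping over the coordinates of $A$, bounding each increment with $\gamma_n$ and H\"older at exponent $p$. Any leftover higher-order terms would be absorbed using $n^{1/(1-1/p)}\gamma_n=O(1)$, which makes products of two stability increments $O(n^{-1})$ in the relevant norm.
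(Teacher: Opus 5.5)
\textbf{Gap: the third-moment term in Chatterjee's bound.} Your route fails there, and the theorem's hypotheses cannot repair it. For $i\in\s_1$, $\Delta_iF$ contains the training increment $B_i=\{n(1+\rho_n)\}^{-1/2}\sum_{j\in\s_2}\{f_{\etahat_1}(W_j)-f_{\etahat_1'}(W_j)\}$. The hypotheses control only its second moment, $\E{B_i^2}=\gamma_n/\{2(1+\rho_n)\}$.

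Conditionally on the two fits, the summands are i.i.d.\ with mean zero. Jensen then gives $\E{|B_i|^3}\ge\{2(1+\rho_n)\}^{-3/2}\E{G^{3/2}}$, where $G=P(f_{\etahat_1}-f_{\etahat_1'})^2\in[0,4]$. If a replacement rarely changes the fit but then changes it a lot, $\E{G^{3/2}}$ is of the same order as $\E{G}=\gamma_n$. In that case $\sum_i\E{|\Delta_iF|^3}$ is of order $n\gamma_n$, which is only $O(1)$.

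Here is a concrete instance.
\begin{itemize}
\item Let $W=(U,S,V,\tilde V)$, with $U$ uniform on $(0,1)$ and $S,V,\tilde V$ independent Rademacher signs. Let $\Hc=\{a,b\}$, $f_a(w)=v$, $f_b(w)=\tilde v$. Let $\Ac$ return $a$ if the training point with the largest $U$ has $S=1$, and $b$ otherwise.
\item The scores are bounded, with $Pf_\eta=0$ and $Pf_\eta^2=1$.
\item $r_i\equiv0$, because the fit depends only on $(U,S)$ and the scores only on $(V,\tilde V)$. Hence $\rho_n=0$ and $\Cov{r_1,r_2}=0$.
\item Replacing $W_1$ changes $\etahat_1$ with probability $1/(n/2+1)$, so $\gamma_n=4/(n+2)$.
\end{itemize}
So every hypothesis holds with $p=\infty$; indeed $T_n$ is exactly a normalized sum of $n$ i.i.d.\ signs. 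Yet on the flip event, which is independent of the fold-two scores, $B_i=\pm n^{-1/2}\sum_{j\in\s_2}(V_j-\tilde V_j)$ is of order one. Hence $\E{|\Delta_iF|^3}\gtrsim 1/n$ for every $i$, and $\sum_i\E{|\Delta_iF|^3}$ stays bounded away from zero. Chatterjee's bound does not even give convergence here.

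Your H\"older-interpolation claim is therefore false. Interpolating $\E{B_i^2}\asymp\gamma_n$ against $\|B_i\|_p=O(1)$ gives at best $\sum_i\E{|B_i|^3}\lesssim n\gamma_n^{(p-3)/(p-2)}=n^{2/((p-1)(p-2))}$ at the permitted rate $\gamma_n\asymp n^{-p/(p-1)}$. Rescuing the route would need higher-moment stability conditions, for instance $\E{G^{3/2}}=O(n^{-3/2})$, which the theorem does not assume.

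\textbf{The missing idea.} The expansion must be arranged so that the training increment enters quadratically and is weighted by the $L^p$-bounded evaluated score, not by another copy of the increment. The paper does this with a leave-one-out Stein kernel instead of replace-one differences. With $\Delta_1=T_n-\E{T_n\mid\Dc_{-1}}$, it uses two exact identities: $\E{f_1\mid\Dc_{-1}}=0$ and $\sqrt{n(1+\rho_n)}\,\E{f_1\Delta_1\mid\Dc_{-1}}=1+r_1$. These yield the kernel $\Kc_n=(1+\frac2n\sum_{i\in\s_1}r_i)/(1+\rho_n)$ and a Stein remainder of order $\sqrt n\,[\E{|f_1|\min\{\Delta_1^2,2|\Delta_1|\}}+\E{\Delta_1^2}]$.

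The paper then splits $\Delta_1$ into two parts.
\begin{itemize}
\item The direct part $f_1/\sqrt{n(1+\rho_n)}$ is handled by $p\ge3$.
\item The training part $B$ has $\E{B^2}=\gamma_n/(4(1+\rho_n))$. H\"older gives $\E{|f_1|\min\{B^2,2|B|\}}\lesssim\gamma_n^{1-1/p}=O(n^{-1})$, which is exactly where $n^{1/(1-1/p)}\gamma_n=O(1)$ is used.
\end{itemize}
In the example above this term is $\sqrt n\,\gamma_n\asymp n^{-1/2}$, while your cubic term is $n\gamma_n\asymp1$.

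The kernel variance is also an exact expression, $\frac{2\Var{r_1}}{n(1+\rho_n)^2}+\frac{(1-2/n)\Cov{r_1,r_2}}{(1+\rho_n)^2}$ with $\Var{r_1}\le n\gamma_n/4$, so it is $O(1/n)$ directly from the hypotheses.

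\textbf{Your variance step is also unsupported.} It needs to relate $\Delta_iF^A$ to $r_i$ uniformly over subsets $A$ of size up to $n-1$, which means controlling the fit after many training points are resampled at once. Neither $\gamma_n$ nor $\Cov{r_1,r_2}$ provides that. Telescoping over the $|A|$ coordinates accumulates up to order $n$ first-order increments.

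Your Step 1, which you abandon, can simply be dropped.
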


\Cref{thm.clt_2cf.quantitative} shows that the dependence between folds does not slow the normal approximation as long as the locality condition holds at a fast enough rate. 
Its conditions trade off tails against stability. 
The moment condition requires a uniformly bounded $L^p$ norm of the score with $p \ge 3$, as in the Berry--Esseen theorem. 
The stability condition requires $\gamma_n = O(n^{-1/(1-1/p)})$, which ranges from $\gamma_n = O(n^{-3/2})$ at $p = 3$ to $\gamma_n = O(n^{-1})$ for bounded scores, so that heavier tails require a more stable learner. 
For bounded scores, the stability condition has the same order as the sufficient condition $n \gamma_n = O(1)$ in \Cref{lem.clt_2cf.sufficient}(a), and the $O(n^{-1/2})$ rate then costs only a strengthening of locality, from $\Cov{r_1, r_2} = o(1)$ to $O(n^{-1})$. 
Since $\Cov{r_1, r_2} = \Var{r_1} \Corr{r_1, r_2}$ and $\Var{r_1} \le n \gamma_n$ (\Cref{lem.clt_2cf.stability_bounds}), the latter holds whenever $\Corr{r_1, r_2} = O(n^{-1})$.

\begin{remark}[Comparison with previous stability conditions]
\label{remark.stability}
Previous stability-based CLTs for cross-validation require first-order loss stability at the $o(n^{-1})$ scale relative to the variance of the leading score
\citep{bayleCrossvalidationConfidenceIntervals2020,austernAsymptoticsCrossvalidation2025,leiModernTheoryCrossValidation2025}.
Applied to the centered and conditionally normalized score used here, this condition is equivalent to
\[
n\gamma_n\to0.
\]
Indeed, if $\widetilde\sigma_n^2=\operatorname{Var}[\mathbb{E}[f_{\etahat_1}(W)\mid W]]$, their condition is
$n\gamma_n/\widetilde\sigma_n^2\to0$, while
\[
0\le 1-\widetilde\sigma_n^2\le \frac n4\gamma_n.
\]
Hence, previous results cover the regime $n\gamma_n=o(1)$, which also forces $\rho_n\to0$.
By contrast, \Cref{thm.clt_2cf} allows weaker stability, for example $n\gamma_n=O(1)$ when the locality condition makes the covariance contributions asymptotically uncorrelated.
\end{remark}

\section{Estimation of \texorpdfstring{$\rho$}{rho} and Inference}
\label{sec.inference}

I propose a resampling-based procedure to construct feasible confidence intervals based on \Cref{thm.clt_2cf}, and show that they attain nominal coverage asymptotically under regularity conditions. 

Let $n=4m$ be the total sample size, and return to the unnormalized scores of \Cref{sec.setup}, that is, $f_{\etahat_k}(w)$ may have non-zero mean and non-unit variance. 
Throughout this section, $\rho_n$ denotes the cross-fold correlation of the centered and normalized fold estimators,
$$\rho_n = \Corr{\frac{\thetahat_{\etahat_1} - \theta_{\etahat_1}}{\sigma_{\etahat_1}}, \frac{\thetahat_{\etahat_2} - \theta_{\etahat_2}}{\sigma_{\etahat_2}}},$$
which is the correlation of \Cref{sec.clt_2cf} computed with each score $f_{\etahat_k}$ replaced by $(f_{\etahat_k} - \theta_{\etahat_k}) / \sigma_{\etahat_k}$. %
I consider directly observable scores, which fit \Cref{ex.setup.cv,ex.setup.otr}, and give an implementation for the BLP \Cref{ex.setup.blp} in \Cref{app.inference.blp}.
Keep the original folds fixed.
For $b=1,\dots,B$ and $k=1,2$, independently draw a uniform subset $\s_{k,b}\subset\s_k$ of size $m$ and fit
$$
\etahat_{k,b}=\Ac(\Dc_{\s_{k,b}}).
$$
Define
\begin{align}
\sigmahat_{\etahat_{k,b}}^2
&=
\frac1{2m-1}\sum_{i\in\stilde_k}
\bs{
f_{\etahat_{k,b}}(W_i)
-
\frac1{2m}\sum_{j\in\stilde_k}f_{\etahat_{k,b}}(W_j)
}^2,
\label{eq.inference.pooled_scale}\\
Z_{k,b}
&=
\frac{
\sum_{i\in\s_{3-k,b}}f_{\etahat_{k,b}}(W_i)
-
\sum_{i\in\stilde_k\setminus\s_{3-k,b}}f_{\etahat_{k,b}}(W_i)
}{
\sqrt m\,\sigmahat_{\etahat_{k,b}}
},
\label{eq.inference.contrast}\\
\wh\rho_n
&=
\max\bc{
-1,\,
\min\bc{
1,\,
\frac1B\sum_{b=1}^B Z_{1,b}Z_{2,b}
}
}.
\label{eq.inference.rhohat}
\end{align}
Set $Z_{k,b}=0$ if its denominator is zero.
Note that $Z_{k,b}$ evaluates the same fitted score on two disjoint samples, so its expected value is zero. 

Let $\sigmahat_{\etahat_k}^2$ be the test-fold sample variance
\begin{equation}
\label{eq.inference.original_scale}
\sigmahat_{\etahat_k}^2
= \frac{1}{2m-1}\sum_{i\in\stilde_k}
\bs{f_{\etahat_k}(W_i)-\thetahat_{\etahat_k}}^2,
\qquad k=1,2.
\end{equation}
The proposed standard error and confidence interval are
\begin{align}
\wh{\se}\of{\thetahat_\etahat}
&=
\sqrt{
\frac{
\sigmahat_{\etahat_1}^2+\sigmahat_{\etahat_2}^2
+2\wh\rho_n\sigmahat_{\etahat_1}\sigmahat_{\etahat_2}
}{2n}
},
\label{eq.inference.se}\\
\CIalpha
&=
\bs{
\thetahat_\etahat
\pm z_{1-\alpha/2}\wh{\se}\of{\thetahat_\etahat}
},
\label{eq.inference.ci}
\end{align}
where $z_{1-\alpha/2}$ is the standard normal quantile.
The original estimator and its target are unchanged.

\begin{theorem}[Asymptotic Validity of Confidence Interval]
\label{thm.inference}
Let \Cref{ass.clt_2cf} hold after centering and normalizing the scores. 
Suppose $0 < \sigma_\eta^2 < \infty$ for every $\eta\in\Hc$.
If, for some $\rho > - 1$, 
\begin{equation}
\Var{\E{Z_{1,1}Z_{2,1}\mid\Dc}}\to0,
\qquad
\rho_{n} \to \rho,
\qquad
B\to\infty,
\label{eq.inference.resampling}
\end{equation}
then $\wh\rho_n-\rho_n\Lto[2]0$.
If, additionally,
\begin{equation}
\frac{\sigma_{\etahat_1}}{\sigma_{\etahat_2}}\pto1,
\label{eq.inference.scale_balance}
\end{equation}
then, for every fixed $\alpha\in(0,1)$,
$$
\P{\theta_\etahat\in\CIalpha}\to1-\alpha.
$$
\end{theorem}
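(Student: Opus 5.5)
The proof has two parts. First I show $\wh\rho_n-\rho_n\to0$ in $L^2$. Then I combine this with \Cref{thm.clt_2cf} and Slutsky's lemma to get coverage.

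\textbf{Consistency of $\wh\rho_n$.} Write $\bar Z = B^{-1}\sum_b Z_{1,b}Z_{2,b}$. Conditionally on $\Dc$, the pairs $(Z_{1,b},Z_{2,b})$ are i.i.d. over $b$, so
\[
\E{(\bar Z-\E{Z_{1,1}Z_{2,1}\mid\Dc})^2}\le B^{-1}\E{Z_{1,1}^2Z_{2,1}^2}.
\]
Truncation to $[-1,1]$ is $1$-Lipschitz and $\rho_n\in[-1,1]$. Hence it suffices to show $\E{Z_{1,1}Z_{2,1}\mid\Dc}\to\rho_n$ in $L^2$, provided we control the Monte Carlo term. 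One route is to bound $\E{Z_{1,1}^2Z_{2,1}^2}$ using the moment assumption. A cleaner route is to first replace $Z$ by a version truncated at a large level $M$, which makes the Monte Carlo term $O(M^4/B)\to0$.

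For the main term, $\E{(\E{Z_{1,1}Z_{2,1}\mid\Dc}-\rho_n)^2}$ equals the conditional-mean variance, which vanishes by assumption, plus $(\E{Z_{1,1}Z_{2,1}}-\rho_n)^2$. So the key step is $\E{Z_{1,1}Z_{2,1}}\to\rho$. Fix $b=1$ and define the oracle versions $Z^\circ_{k}$ with $\sigmahat_{\etahat_{k,1}}$ replaced by $\sigma_{\etahat_{k,1}}$.

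The subsample $\s_{3-k,1}$ has size $m$ and is used to train $\etahat_{3-k,1}$. Its complement in $\stilde_k$, also of size $m$, is independent of $\etahat_{3-k,1}$. Since the centering $\theta_{\etahat_{k,1}}$ cancels in the contrast,
\[
Z^\circ_k=A_k-B_k,
\]
where $A_k$ is the normalized sum of centered scores over $\s_{3-k,1}$ and $B_k$ is the sum over the complement. The cross terms $A_1B_2$, $B_1A_2$ and $B_1B_2$ have zero mean: in each, at least one factor is an independent centered evaluation sample, conditionally independent of the other factor given the fitted models. Hence
\[
\E{Z^\circ_1Z^\circ_2}=\E{A_1A_2}.
\]
This is exactly the cross-fold covariance of the two-fold construction of \Cref{sec.clt_2cf} on the $2m=n/2$ points $\s_{1,1}\cup\s_{2,1}$. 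It is therefore the correlation $\rho_{n/2}$, up to the normalization (each fold estimator there has unit variance by the conditional standardization and the computation $\E{r_i}=\rho$). Since $\rho_n\to\rho$, we get $\rho_{n/2}\to\rho$, and so $\E{Z^\circ_1Z^\circ_2}-\rho_n\to0$.

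\textbf{Main obstacle.} The hard step is replacing $\sigmahat_{\etahat_{k,1}}$ by $\sigma_{\etahat_{k,1}}$ inside an $L^1$ limit. Conditionally on the fitted model, $\sigmahat^2/\sigma^2$ is a sample variance of $2m$ i.i.d. standardized scores with bounded $L^p$ norm, $p>2$. By a weak law with uniformly integrable $|f|^{p/2}$, it converges to $1$ in probability uniformly. For uniform integrability of $Z_1Z_2$ I would argue as follows. The $Z^\circ_k$ have $L^p$-bounded norms, via a Rosenthal bound on $A_k$, where dependence within $\s_{3-k,1}$ is controlled by the stability bound of \Cref{lem.clt_2cf.stability_bounds}. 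The ratio $\sigma/\sigmahat$ is only unbounded on an event of vanishing probability. I would handle that event by the truncation above, which also takes care of the convention $Z=0$.

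\textbf{Coverage.} By \Cref{thm.clt_2cf} applied to the normalized scores,
\[
\frac{\sqrt n\,\bigl(\tfrac12\sum_k(\thetahat_{\etahat_k}-\theta_{\etahat_k})/\sigma_{\etahat_k}\bigr)}{\sqrt{1+\rho_n}}\leadsto\Nc(0,1).
\]
The scale balance condition lets us write $\sigma_{\etahat_k}=\bar\sigma(1+o_p(1))$ with $\bar\sigma=(\sigma_{\etahat_1}+\sigma_{\etahat_2})/2$. Since each normalized fold term is $O_p(n^{-1/2})$, it follows that
\[
\sqrt n(\thetahat_\etahat-\theta_\etahat)/\bar\sigma=\sqrt{1+\rho_n}\,\Nc(0,1)+o_p(1).
\]
Moreover, $\sigmahat_{\etahat_k}/\sigma_{\etahat_k}\pto1$ by the same test-fold law of large numbers, and $\wh\rho_n-\rho_n\pto0$. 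Together these give
\[
\frac{\sqrt n\,\wh{\se}}{\bar\sigma/\sqrt{\,\cdot\,}}\to\sqrt{(1+\rho_n)},
\]
more precisely $2n\,\wh{\se}^2/\bar\sigma^2=2(1+\rho_n)+o_p(1)$. Since $1+\rho>0$, Slutsky's lemma yields $(\thetahat_\etahat-\theta_\etahat)/\wh{\se}\leadsto\Nc(0,1)$. Coverage $1-\alpha$ then follows by continuity of the normal distribution function.
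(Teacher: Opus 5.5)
Your outline matches the paper's: oracle contrasts $Z_k^\circ$, the primary/reference decomposition giving $\E{Z_1^\circ Z_2^\circ}=\rho_{n/2}$, the bias--variance split that uses the assumed $\Var{\E{Z_{1,1}Z_{2,1}\mid\Dc}}\to0$, and Slutsky with \Cref{thm.clt_2cf} for coverage. The gap is at the step you flag yourself. You never get a bound on the \emph{studentized} contrasts $Z_{k,1}$ that is uniform in $n$, and neither of your routes provides one.

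Only $B\to\infty$ is assumed, so the Monte Carlo term $B^{-1}\E{Z_{1,1}^2Z_{2,1}^2}$ needs $\sup_n\E{Z_{1,1}^2Z_{2,1}^2}<\infty$. The moment assumption cannot give this: the scores may have only $p<4$ moments, and $Z_{k,1}=Z_k^\circ\,\sigma_{\etahat_{k,1}}/\sigmahat_{\etahat_{k,1}}$ carries a ratio with no moment control. Truncating at level $M$ does not help. $\wh\rho_n$ averages the \emph{untruncated} products over $B\to\infty$ draws, so swapping in truncated ones requires uniform integrability of $Z_{1,1}Z_{2,1}$ itself, which is the missing bound again. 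Nor does "$\sigma/\sigmahat$ is large only on an event of vanishing probability" give uniform integrability. On that event $\abs{Z_{1,1}Z_{2,1}}$ can be of order $m$, so you would need the event to have probability $o(1/m)$, which nothing in the assumptions delivers. The same problem blocks your replacement of $\E{Z_{1,1}Z_{2,1}}$ by $\E{Z_1^\circ Z_2^\circ}$. (Your Rosenthal bound on $A_k$ is fine, but stability plays no role there: its summands are conditionally i.i.d.\ given $\etahat_{k,1}$.)

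The missing idea is that studentizing by the pooled variance makes each contrast self-normalized with respect to the random split. Condition on $\Dc$ and $\s_{k,b}$. Then the pooled scores $\{f_{\etahat_{k,b}}(W_i)\}_{i\in\stilde_k}$, $\sigmahat_{\etahat_{k,b}}$ and $\sigma_{\etahat_{k,b}}$ are fixed, while $\s_{3-k,b}$ is a uniform half of $\stilde_k$ independent of them. Hence $Z_{k,b}=\sqrt{(2m-1)/m}\,\epsilon^\top v$, where $v$ is the centered pooled score vector scaled to unit norm and $\epsilon$ is a vector of balanced signs. The balanced-sign moment identities give $\E{Z_{k,b}^2\mid\Dc,\s_{k,b}}\le2$ and $\E{Z_{k,b}^4\mid\Dc,\s_{k,b}}\le12$, with no assumption on the score distribution. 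Conditional Cauchy--Schwarz then bounds the Monte Carlo term by $12/B$. Moreover, $Z_{k,1}-Z_k^\circ=Z_{k,1}(1-\sigmahat_{\etahat_{k,1}}/\sigma_{\etahat_{k,1}})$, where the second factor is fixed given the pooled scores (both contrasts vanish when $\sigmahat_{\etahat_{k,1}}=0$). Therefore
\[
\E{(Z_{k,1}-Z_k^\circ)^2}
\le
2\E{\bigl(\sigmahat_{\etahat_{k,1}}/\sigma_{\etahat_{k,1}}-1\bigr)^2}
\le
2\E{\abs{\sigmahat_{\etahat_{k,1}}^2/\sigma_{\etahat_{k,1}}^2-1}}
\to0.
\]
The last limit follows from your in-probability law of large numbers, because $\E{\sigmahat_{\etahat_{k,1}}^2/\sigma_{\etahat_{k,1}}^2\mid\etahat_{k,1}}=1$. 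Combined with $\E{(Z_k^\circ)^2}=2$, Cauchy--Schwarz gives $\E{Z_{1,1}Z_{2,1}}=\rho_{n/2}+o(1)$. The rest of your argument then goes through, with no truncation or control of $\sigma/\sigmahat$ needed.
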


The first condition in \cref{eq.inference.resampling} requires the covariance averaged over resampling splits to stabilize across datasets.
The second requires the cross-fold correlation to converge to some limit greater than $-1$.

\section{Simulation Study}
\label{sec.simulations}

I examine the finite-sample performance of the confidence intervals derived in \Cref{sec.inference} in a simulation study. 
In particular, I consider the cross-validation (CV) model comparison (\Cref{ex.setup.cv}) and the BLP test of predictive power (\Cref{ex.setup.blp}). 
I use sample sizes $n \in \bc{1000, 2000}$ and 2,400 Monte Carlo replications for each data-generating process and sample size. 
The correlation estimator uses $B=500$ repetitions. 

Observations are independent and identically distributed, with $X_i \sim \Nc(0, I_5)$ and $U_i \sim \Nc(0, 1)$ independent of $X_i$.
I consider two no-signal designs. 
In the homoskedastic Gaussian design,
\[
    Y_i=U_i.
\]
In the heteroskedastic, conditionally skewed design,
\[
    Y_i
    =
    \omega(X_i)
    \frac{
        \exp{\lambda(X_i)U_i}
        -\exp{\lambda(X_i)^2/2}
    }{
        \sqrt{
            \exp{2\lambda(X_i)^2}
            -\exp{\lambda(X_i)^2}
        }
    },
\]
where, writing $\Lambda(t)=\bc{1+\exp{-t}}^{-1}$,
\[
    \omega(x)=0.5+1.5\Lambda(x_1+0.5x_2),
    \qquad
    \lambda(x)=0.25+0.75\Lambda(x_2-x_3).
\]
Both designs satisfy $\E{Y_i \mid X_i}=0$.
The conditional variance is one in the Gaussian design and $\omega(X_i)^2$ in the skewed design. 
The latter also permits conditional skewness to vary with the covariates.

I estimate the predictive model with two learners.
The first is a single-hidden-layer neural network from the \texttt{R} package \texttt{nnet}, with 10 hidden units, a linear output, no skip connections, and weight decay equal to $0.1$.
The second is a random forest from the \texttt{R} package \texttt{ranger}, with 200 trees, three candidate covariates per split, and a minimum node size of five.

For CV, I compare the squared-error loss of the fitted learner with that of the training fold outcome mean.
In the notation of \Cref{ex.setup.cv}, $L(y, \hat{y}) = (y - \hat{y})^2$, $\muhat_k$ is the learner trained on $\s_k$, and $\nuhat_k$ is the constant predictor equal to the average of $Y_i$ over $i \in \s_k$.

For both applications, I report the coverage rate $\P{\theta_\etahat \in \CIalpha}$ at the nominal $95\%$ level.
Since $\E{Y_i \mid X_i} = 0$ in both designs, the BLP estimand is $\theta_\etahat=0$.
For CV, the estimand of \Cref{ex.setup.cv} is
$$\theta_\etahat = \frac{1}{2} \sum_{k=1}^{2} \int \bs{\bc{y - \muhat_k(x)}^2 - \bc{y - \nuhat_k}^2} \rmd P(y, x),$$
the average out-of-sample loss difference, holding the fitted models fixed.
This target need not be zero under a no-signal design.
I compute it by integrating analytically over the outcome noise and numerically over 65,000 independent covariate draws per fitted model. 

\begin{table}[tbp]
    \centering
    \small
    \caption{Coverage rates of nominal 95\% BLP and CV confidence intervals.}
    \label{tab.simulations.coverage}
    \begin{tabular}{@{}lrcccc@{}}
        \toprule
        & & \multicolumn{2}{c}{BLP} & \multicolumn{2}{c}{CV} \\
        \cmidrule(lr){3-4}
        \cmidrule(l){5-6}
        Design & $n$ & Estimated $\wh\rho_n$ & $\wh\rho_n = 0$ & Estimated $\wh\rho_n$ & $\wh\rho_n = 0$ \\
        \midrule
        \multicolumn{6}{c}{Neural network} \\
        \addlinespace
        Gaussian & 1000 & 94.38 & 90.42 & 94.21 & 91.67 \\
                 & 2000 & 95.04 & 91.13 & 94.75 & 91.54 \\
        Skewed   & 1000 & 94.33 & 91.54 & 94.38 & 92.75 \\
                 & 2000 & 95.13 & 92.38 & 94.42 & 92.13 \\
        \midrule
        \multicolumn{6}{c}{Random forest} \\
        \addlinespace
        Gaussian & 1000 & 94.13 & 84.50 & 94.54 & 85.50 \\
                 & 2000 & 95.67 & 86.88 & 95.54 & 87.54 \\
        Skewed   & 1000 & 94.75 & 87.96 & 95.29 & 89.00 \\
                 & 2000 & 95.33 & 89.33 & 95.79 & 90.38 \\
        \bottomrule
    \end{tabular}

    \medskip
    \begin{minipage}{\linewidth}
        \footnotesize
        \textit{Notes:}
        Entries are coverage percentages of $\theta_{\etahat}$,
        based on 2,400 replications per design and sample size.
        Gaussian denotes the homoskedastic Gaussian design,
        Skewed denotes the heteroskedastic, conditionally skewed design.
        Estimated $\wh\rho_n$ uses \cref{eq.inference.rhohat} with $B=500$;
        $\wh\rho_n = 0$ ignores the cross-fold correlation.
    \end{minipage}
\end{table}

\Cref{tab.simulations.coverage} reports the results.
With the estimated correlation, coverage rates range from $94.13\%$ to $95.67\%$ for BLP and from $94.21\%$ to $95.79\%$ for CV, close to the nominal $95\%$ level across the learners, designs, and sample sizes considered.
Setting $\wh\rho_n = 0$, which ignores the cross-fold correlation, yields undercoverage in every case, with coverage rates from $84.50\%$ to $92.38\%$ for BLP and from $85.50\%$ to $92.75\%$ for CV.
The undercoverage is most severe for the random forest, whose coverage falls below $90\%$ in the Gaussian design for both applications and sample sizes.

\section{Conclusion} \label{sec.conclusion}

I revisited the problem of inference with cross-fitting when the standardized fitted score need not converge to a deterministic limit, a nonregularity shared by cross-validation model comparison, tests of predictive power and treatment effect heterogeneity, and the value of learned treatment rules. 
Exploiting a new locality condition, I established a central limit theorem for cross-fitting estimators with an asymptotic variance that depends on the cross-fold correlation, and I proposed confidence intervals based on a resampling estimator of this correlation that attain asymptotically nominal coverage. 
In simulations with random forests and neural networks, these intervals delivered coverage close to the nominal level, whereas intervals that ignore the cross-fold correlation undercovered.

\clearpage
\renewcommand{\baselinestretch}{1}
\bibliography{refs2.bib}
\clearpage

\begin{appendices}

\appendixnumbering
\crefalias{section}{appendix}
\crefalias{subsection}{appendix}
\crefalias{subsubsection}{appendix}

\section{Extension to \texorpdfstring{$K_n$}{K-n}-Fold Cross-Fitting}
\label{app.clt_general}

I extend \Cref{thm.clt_2cf} to a possibly growing number of folds, including leave-one-out cross-fitting.

Let $K_n\in\bc{2,\dots,n}$ divide $n$, let $m_n=n/K_n$, and partition $\bs{n}$ into equal-sized folds
$$
\s_1,\dots,\s_{K_n}.
$$
For $k=1,\dots,K_n$, let
$$
\etahat_{-k}
=
\Ac\bp{\Dc_{\bs{n}\setminus\s_k}}
$$
denote the model trained on all observations outside fold $\s_k$.
In this section, $k(i)$ denotes the fold containing observation $i$, so that $W_i$ is evaluated by $\etahat_{-k(i)}$.
The cross-fitting estimator and its estimand are
$$
\thetahat_\etahat
=
\frac{1}{K_n}\sum_{k=1}^{K_n}\thetahat_{\etahat_{-k}},
\qquad
\thetahat_{\etahat_{-k}}
=
\frac{1}{m_n}\sum_{i\in\s_k}f_{\etahat_{-k}}(W_i),
\qquad
\theta_\etahat
=
\frac{1}{K_n}\sum_{k=1}^{K_n}P f_{\etahat_{-k}},
$$
which reduce to those of \Cref{sec.setup} when $K_n=2$.

Unlike in the two-fold result, I use a deterministic unconditional normalization.
Let
\begin{equation}
\sigma_n^2
=
\E{P\bc{
f_{\etahat_{-1}}-P f_{\etahat_{-1}}
}^2}
\in(0,\infty),
\label{eq.clt_K.sigma}
\end{equation}
and, for every $\eta\in\Hc$, define the standardized score
$$
\tilde f_\eta(w)
=
\frac{f_\eta(w)-P f_\eta}{\sigma_n}.
$$
Thus $P\tilde f_\eta=0$ for every $\eta\in\Hc$, and, by fold symmetry, for $k=1,\dots,K_n$,
\begin{equation}
P \tilde f_{\etahat_{-k}}=0,
\qquad
\E{P \tilde f_{\etahat_{-k}}^2}=1.
\label{eq.clt_K.normalization}
\end{equation}

Define the conditional variances of the standardized scores and their average across folds,
\begin{equation}
v_{k,n}
=
P \tilde f_{\etahat_{-k}}^2,
\qquad
\bar v_n
=
\frac{1}{K_n}\sum_{k=1}^{K_n}v_{k,n}.
\label{eq.clt_K.v}
\end{equation}
Hence $\E{v_{k,n}}=\E{\bar v_n}=1$, but $v_{k,n}$ need not equal one for a realized fit.

For each fold, let
\begin{equation}
S_{k,n}
=
\frac{1}{\sqrt{m_n}}
\sum_{i\in\s_k}
\tilde f_{\etahat_{-k}}(W_i).
\label{eq.clt_K.fold}
\end{equation}
Conditional on $\etahat_{-k}$, the summands are i.i.d.\ with mean zero and variance $v_{k,n}$, so $\E{S_{k,n}}=0$ and $\Var{S_{k,n}}=\E{v_{k,n}}=1$.
By fold symmetry,
$$
\rho_n
=
\Cov{S_{k,n},S_{\ell,n}},
\qquad
k\ne\ell,
$$
does not depend on $k$ and $\ell$.
Since $\sqrt n\bp{\thetahat_\etahat-\theta_\etahat}/\sigma_n=K_n^{-1/2}\sum_{k=1}^{K_n}S_{k,n}$,
\begin{equation}
\Var{\frac{\sqrt n\bp{\thetahat_\etahat-\theta_\etahat}}{\sigma_n}}
=
V_n,
\qquad
V_n=1+(K_n-1)\rho_n.
\label{eq.clt_K.variance}
\end{equation}
Define
\begin{equation}
T_n
=
\frac{
\sqrt n\bp{\thetahat_\etahat-\theta_\etahat}
}{
\sigma_n\sqrt{V_n}
}.
\label{eq.clt_K.T}
\end{equation}

Without loss of generality, label the observations so that 
$$\s_k=\bc{(k-1)m_n+1,\dots,km_n}$$ 
for $k=1,\dots,K_n$.
In particular, $1\in\s_1$, $2\in\s_1$ when $m_n\ge2$, and $m_n+1\in\s_2$.
Let $W_1'\sim P$ be independent of the data, and let $\etahat_{-2}'$ denote the model obtained after replacing $W_1$ by $W_1'$ in the training sample of $\etahat_{-2}$.
Define the mean-square stability coefficient
\begin{equation}
\gamma_n
=
\E{\bc{
\tilde f_{\etahat_{-2}}(W)
-
\tilde f_{\etahat_{-2}'}(W)
}^2},
\label{eq.clt_K.gamma}
\end{equation}
where $W\sim P$ is independent of all observed and replacement data.

For $i=1,\dots,n$, define
\begin{equation}
r_i
=
\sqrt{m_n}\,
\Cov{
\tilde f_{\etahat_{-k(i)}}(W_i),
\frac{1}{K_n-1}
\sum_{\ell\ne k(i)}
S_{\ell,n}
\,\middle|\,
\Dc_{-i}
}.
\label{eq.clt_K.r}
\end{equation}
This is the $K_n$-fold analogue of the local covariance coefficient of \Cref{sec.clt_2cf}, and
\begin{equation}
\E{r_i}=\rho_n.
\label{eq.clt_K.rho_identity}
\end{equation}
Indeed, since $\etahat_{-k(i)}$ is $\Dc_{-i}$-measurable, $\E{\tilde f_{\etahat_{-k(i)}}(W_i)\mid\Dc_{-i}}=0$, so 
$$\E{r_i}=\sqrt{m_n}(K_n-1)^{-1}\sum_{\ell\ne k(i)}\E{\tilde f_{\etahat_{-k(i)}}(W_i)S_{\ell,n}},$$ 
and exchangeability within $\s_{k(i)}$ gives 
$$\sqrt{m_n}\,\E{\tilde f_{\etahat_{-k(i)}}(W_i)S_{\ell,n}}=\Cov{S_{k(i),n},S_{\ell,n}}=\rho_n.$$

The approximate Stein kernel is
\begin{equation}
\Kc_n
=
\frac{\bar v_n+(K_n-1)\bar r_n}{V_n},
\qquad
\bar r_n
=
\frac1n\sum_{i=1}^n r_i,
\label{eq.clt_K.kernel}
\end{equation}
and $\E{\Kc_n}=1$ by \cref{eq.clt_K.v,eq.clt_K.rho_identity}.

\begin{theorem}[$K_n$-Fold Cross-Fitting Central Limit Theorem]
\label{thm.clt_K}
Suppose that there exist $\bar c<\infty$, $\ubar c>0$, and $p\in(2,\infty]$, with $1/p=0$ when $p=\infty$, such that the first two conditions below hold for all sufficiently large $n$ and the last three hold as $n\to\infty$:
\begin{align}
\norm{\tilde f_{\etahat_{-1}}(W)}_p
&\le \bar c,
\label{ass.clt_K.moments}\\
V_n
&\ge \ubar c,
\label{ass.clt_K.degeneracy}\\
\frac{\Var{\bar v_n}}{V_n^2}
&\to0,
\label{ass.clt_K.scale}\\
n^{\frac{1}{2(1-1/p)}}(K_n-1)\gamma_n
&\to0,
\label{ass.clt_K.stability}\\
\frac{(K_n-1)^2\Var{\bar r_n}}{V_n^2}
&\to0.
\label{ass.clt_K.locality}
\end{align}
Then
$$
T_n\leadsto\Nc\of{0,1}.
$$
\end{theorem}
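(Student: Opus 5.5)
We prove \Cref{thm.clt_K} with Stein's method, through an exchangeable-pair / leave-one-out Stein identity whose approximate kernel is $\Kc_n$ from \cref{eq.clt_K.kernel}. Write
$$T_n=V_n^{-1/2}n^{-1/2}\sum_{i=1}^n \xi_i,\qquad \xi_i=\sqrt{K_n}\,m_n^{-1/2}\sqrt{m_n/n}\cdot\sqrt{m_n}\,\tilde f_{\etahat_{-k(i)}}(W_i)/\sqrt{m_n},$$
that is, $\sqrt{n}\,T_n\sqrt{V_n}=\sum_i \tilde f_{\etahat_{-k(i)}}(W_i)$. It suffices to show $\E{T_n g(T_n)-g'(T_n)}\to0$ for every $g$ with $\norm{g}_\infty,\norm{g'}_\infty,\norm{g''}_\infty\le 1$ (smooth Wasserstein-type test functions), since this class determines weak convergence to $\Nc(0,1)$.

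\textbf{Step 1: leave-one-out decomposition.} For each $i$, let $T_n^{(i)}$ be $T_n$ with $W_i$ replaced by an independent copy $W_i'$. Equivalently, $T_n^{(i)}$ is a $\Dc_{-i}$-measurable-plus-independent version, obtained by refitting every $\etahat_{-\ell}$, $\ell\ne k(i)$, on the modified data and dropping or replacing the $i$th evaluation term. Since $\E{\tilde f_{\etahat_{-k(i)}}(W_i)\mid\Dc_{-i},W_i'}=0$ and $T_n^{(i)}$ does not depend on $W_i$, we have $\E{\tilde f_{\etahat_{-k(i)}}(W_i)\,g(T_n^{(i)})}=0$. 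Taylor expansion then gives
$$\E{T_ng(T_n)}=\frac{1}{\sqrt{nV_n}}\sum_i \E{\tilde f_{\etahat_{-k(i)}}(W_i)(T_n-T_n^{(i)})g'(T_n^{(i)})}+R_1 ,$$
where $|R_1|\lesssim (nV_n)^{-1/2}\sum_i\E{|\tilde f(W_i)|\,|T_n-T_n^{(i)}|^2}$.

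\textbf{Step 2: identify the kernel.} Split $T_n-T_n^{(i)}$ into two parts. The direct part is $(nV_n)^{-1/2}\{\tilde f_{\etahat_{-k(i)}}(W_i)-\tilde f_{\etahat_{-k(i)}}(W_i')\}$; its conditional product with $\tilde f(W_i)$ averages to $v_{k(i),n}$. The indirect part is the change in $\sum_{\ell\ne k(i)}\sqrt{m_n}S_{\ell,n}$ caused by retraining with $W_i'$; its conditional covariance with $\tilde f(W_i)$ given $\Dc_{-i}$ is exactly $(K_n-1)r_i/\sqrt{m_n}\cdot\sqrt{m_n}$ by \cref{eq.clt_K.r}. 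Replacing $g'(T_n^{(i)})$ by $g'(T_n)$ costs another second-order remainder $R_2$ of the same form as $R_1$. Summing over $i$ then gives
$$\E{T_ng(T_n)}=\E{\Kc_n g'(T_n)}+R_1+R_2+R_3 ,$$
where $R_3$ collects the gap between the products and their $\Dc_{-i}$-conditional expectations. That gap is controlled because $g'(T_n)$ is close to the $\Dc_{-i}$-measurable $g'(T_n^{(i)})$.

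\textbf{Step 3: kernel concentration.} $|\E{(\Kc_n-1)g'(T_n)}|\le\norm{\Kc_n-1}_2$, and $\Var{\Kc_n}\le 2\Var{\bar v_n}/V_n^2+2(K_n-1)^2\Var{\bar r_n}/V_n^2\to0$ by \cref{ass.clt_K.scale,ass.clt_K.locality}. Since $\E{\Kc_n}=1$, this term vanishes.

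\textbf{Step 4: remainders (main obstacle).} We must show $R_1,R_2,R_3\to0$. The direct part contributes $O(n^{-1/2})$ times third moments; under a pure $L^p$ bound with $p<3$ we truncate $\tilde f$ at level $n^{1/2-\epsilon}$ instead. The indirect part is the sum of $(K_n-1)$ fold-level refitting changes. By \cref{ass.clt_K.moments} and Hölder with exponents $p$ and $p/(p-1)$, its contribution is bounded by
$$n^{1/2}\cdot\norm{\tilde f}_p\cdot\big((K_n-1)\gamma_n\big)^{(1-1/p)}\cdot(\text{normalization}),$$
where the mean-square change per test point is $\gamma_n$ from \cref{eq.clt_K.gamma}, and summing $m_n(K_n-1)$ test points gives order $n\gamma_n(K_n-1)$ for the squared change of $\sqrt{n}$-scaled sums, divided by $n$. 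Balancing the exponents should reproduce exactly the rate $n^{1/(2(1-1/p))}(K_n-1)\gamma_n\to0$ in \cref{ass.clt_K.stability}, with $V_n\ge\ubar c$ from \cref{ass.clt_K.degeneracy} keeping the normalization bounded.

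The delicate point is that the same retraining perturbation appears both linearly, in the kernel, and quadratically, in the remainder. To avoid double counting, we first condition on $\Dc_{-i}$ and then apply the Efron–Stein inequality across folds. We expect matching the Hölder exponents to the stated rate to be the hardest calculation.
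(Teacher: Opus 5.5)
Your architecture is the paper's. You use a leave-one-out Stein identity whose conditional coefficient is $c_i=v_{k(i),n}+(K_n-1)r_i$ and the kernel $\Kc_n=(nV_n)^{-1}\sum_i c_i$, and your Step~3 is exactly the paper's kernel-concentration step. Using the replace-one copy $T_n^{(i)}$ instead of the paper's $\E{T_n\mid\Dc_{-i}}$ is harmless, but note that $T_n^{(i)}$ is $(\Dc_{-i},W_i')$-measurable, not $\Dc_{-i}$-measurable. With $f_i=\tilde f_{\etahat_{-k(i)}}(W_i)$, the identity $\E{f_i(T_n-T_n^{(i)})\mid\Dc_{-i},W_i'}=(nV_n)^{-1/2}c_i$ holds exactly. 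So if you apply it before replacing $g'(T_n^{(i)})$ by $g'(T_n)$, your $R_3$ is identically zero, and no Efron--Stein argument is needed.

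The genuine gap is Step~4, which is exactly where \cref{ass.clt_K.stability} has to enter, and the tools you list do not give the rate you assert. Write $d=T_n-T_n^{(i)}$. Within-fold orthogonality (from $P\tilde f_\eta=0$ for every $\eta$) gives $\E{(S_{\ell,n}-S_{\ell,n}')^2}=\gamma_n$ for the refitted fold sums. Cauchy--Schwarz over the $K_n-1$ refitted folds then gives $\E{d^2}\lesssim 1/n+(K_n-1)\gamma_n$, and this is the only control the assumptions provide on the refitting perturbation.

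Your remainder, however, is quadratic, $\E{\abs{f_i}d^2}$. H\"older against $\norm{f_i}_p\le\bar c$ then needs $\E{\abs{d}^{2q}}^{1/q}$ with $q=p/(p-1)>1$, a moment of order above two that nothing controls. So the bound $((K_n-1)\gamma_n)^{1-1/p}$ does not follow unless $p=\infty$. Truncating $\tilde f$ does not help either, because on $\{\abs{f_i}>M\}$ you still face $d^2$ against an unbounded weight.

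The missing idea is to use $\norm{g'}_\infty\le1$ alongside $\norm{g''}_\infty\le1$. The Taylor remainder is then at most $\min\{d^2,2\abs{d}\}\le 2\abs{d}\min\{\abs{d},1\}$. Since $(\abs{d}\min\{\abs{d},1\})^q\le d^2$ for $q\in[1,2)$, H\"older gives $\E{\abs{f_i}\min\{d^2,2\abs{d}\}}\le 2\bar c\,\E{d^2}^{1-1/p}$ from second moments alone. This covers the own-score and refitting parts at once, without truncation. Summing the $n$ terms against $(nV_n)^{-1/2}$ and using \cref{ass.clt_K.degeneracy} gives
$C\sqrt n\{1/n+(K_n-1)\gamma_n\}^{1-1/p}\le C[n^{-(1/2-1/p)}+\{n^{1/(2(1-1/p))}(K_n-1)\gamma_n\}^{1-1/p}]\to0$.

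Your $R_2$ is also not of the same form as $R_1$, because it carries the conditional coefficient $c_i$ rather than $f_i d$. The paper bounds it in three steps:
\begin{itemize}
\item conditional Cauchy--Schwarz, $\abs{c_i}\le\sqrt{nV_n}\,v_{k(i),n}^{1/2}Y_i^{1/2}$, with $Y_i$ the conditional second moment of the increment;
\item the bound $\E{\min\{\abs{d},1\}\mid\cdot}\le\min\{Y_i,1\}^{1/2}$;
\item H\"older with $\norm{v_{k,n}^{1/2}}_p\le\bar c$, which follows from conditional Jensen applied to \cref{ass.clt_K.moments}.
\end{itemize}
Together these again yield $2\bar c\,\E{Y_i}^{1-1/p}$ per summand, hence the same vanishing total.
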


\Cref{ass.clt_K.scale} is an additional condition required by the unconditional normalization.
It requires the average conditional variance $\bar v_n$ to concentrate around its mean.
By fold symmetry, $\Var{\bar v_n}=\Var{v_{1,n}}/K_n+(1-1/K_n)\Cov{v_{1,n},v_{2,n}}$, so, when $K_n\to\infty$, the condition can hold even if each $v_{k,n}$ remains random, provided $\Var{v_{1,n}}/(K_nV_n^2)\to0$ and $\Cov{v_{1,n},v_{2,n}}/V_n^2\to0$.
Under the conditional normalization used in \Cref{sec.clt_2cf}, $v_{k,n}=1$ almost surely, so the condition holds immediately.
\Cref{ass.clt_K.stability} is the analogue of \Cref{ass.clt_2cf.stability}, with the same exponent.
The factor $K_n-1$ reflects that one observation enters the training samples of $K_n-1$ fitted models.
For fixed $K_n$, the condition has the same order as \Cref{ass.clt_2cf.stability}.
\Cref{ass.clt_K.locality} ensures that the local covariance term of \cref{eq.clt_K.kernel} has vanishing variance.
By fold symmetry and exchangeability within folds,
$$
\Var{\bar r_n}
=
\frac{\Var{r_1}}{n}
+
\frac{m_n-1}{n}\Cov{r_1,r_2}
+
\bp{1-\frac{1}{K_n}}\Cov{r_1,r_{m_n+1}},
$$
where the second term vanishes when $m_n=1$.
The three terms control the diagonal contribution, the dependence between distinct observations in the same fold, and the dependence between observations in different folds, respectively.

When $K_n=2$ and the scores satisfy the normalization of \Cref{sec.clt_2cf}, $\sigma_n=1$, $\tilde f_\eta=f_\eta$, and $v_{k,n}=1$ almost surely, so \cref{ass.clt_K.scale} holds trivially.
Moreover, $\rho_n$, $\gamma_n$, and $r_i$, $i\in\s_1$, coincide with those of \Cref{sec.clt_2cf}, with $\etahat_{-1}$ and $\etahat_{-2}$ in the roles of $\etahat_2$ and $\etahat_1$, so \cref{ass.clt_K.moments,ass.clt_K.degeneracy,ass.clt_K.stability} reduce to \Cref{ass.clt_2cf.moments,ass.clt_2cf.degeneracy,ass.clt_2cf.stability}.
For \cref{ass.clt_K.locality}, $\bar r_n$ is the average of the fold means $R_k=m_n^{-1}\sum_{i\in\s_k}r_i$, $k=1,2$, which have equal variances by fold symmetry.
Minkowski's inequality and \Cref{lem.clt_2cf.stability_bounds} then give
$$
\Var{\bar r_n}
\le
\Var{R_1}
=
\frac{2}{n}\Var{r_1}+\bp{1-\frac2n}\Cov{r_1,r_2}
\le
\frac{\gamma_n}{2}+\abs{\Cov{r_1,r_2}},
$$
so \Cref{ass.clt_2cf.degeneracy,ass.clt_2cf.stability,ass.clt_2cf.locality} imply \cref{ass.clt_K.locality}.
Thus \Cref{thm.clt_K} contains \Cref{thm.clt_2cf} as the special case $K_n=2$.

\begin{proof}
For $i=1,\dots,n$, write $f_i=\tilde f_{\etahat_{-k(i)}}(W_i)$, so that $T_n=(nV_n)^{-1/2}\sum_{i=1}^n f_i$.
Since $\etahat_{-k(i)}$ is $\Dc_{-i}$-measurable, \cref{eq.clt_K.normalization,eq.clt_K.v} give
\begin{equation}
\E{f_i\mid\Dc_{-i}}=0,
\qquad
\E{f_i^2\mid\Dc_{-i}}=v_{k(i),n}.
\label{eq.clt_K.proof.conditional}
\end{equation}
When $p<\infty$, the conditional Jensen inequality gives $v_{k,n}^{p/2}\le\E{\abs{\tilde f_{\etahat_{-k}}(W)}^p\mid\etahat_{-k}}$, and when $p=\infty$, $v_{k,n}\le\bar c^2$ directly.
Hence \cref{ass.clt_K.moments} and fold symmetry give, for all $i$ and $k$,
\begin{equation}
\norm{f_i}_p\le\bar c,
\qquad
\norm{v_{k,n}^{1/2}}_p\le\bar c.
\label{eq.clt_K.proof.moments}
\end{equation}
For $i=1,\dots,n$, let $\Delta_i=T_n-\E{T_n\mid\Dc_{-i}}$ and, for $\ell\ne k(i)$, let $A_{i,\ell}=S_{\ell,n}-\E{S_{\ell,n}\mid\Dc_{-i}}$.
Let $q=1/(1-1/p)\in[1,2)$ denote the conjugate exponent of $p$.

\paragraph{Step 1. Increment bound.}
Conditional on $\Dc_{-i}$, the scores $f_j$, $j\in\s_{k(i)}\setminus\bc{i}$, are fixed, while $f_i$ and the models $\etahat_{-\ell}$, $\ell\ne k(i)$, vary.
Hence, by \cref{eq.clt_K.proof.conditional},
\begin{equation}
\Delta_i
=
\frac{f_i}{\sqrt{nV_n}}
+
\frac{1}{\sqrt{K_nV_n}}
\sum_{\ell\ne k(i)}A_{i,\ell}.
\label{eq.clt_K.proof.delta}
\end{equation}
Let $\etahat_{-\ell}'$ denote the model obtained after replacing $W_i$ by an independent copy in the training sample of $\etahat_{-\ell}$, and let $S_{\ell,n}'$ denote $S_{\ell,n}$ computed with $\etahat_{-\ell}'$ in place of $\etahat_{-\ell}$.
Conditional on $\Dc_{-i}$, $S_{\ell,n}$ and $S_{\ell,n}'$ are independent and identically distributed, so $\E{A_{i,\ell}^2}=\frac12\E{\bp{S_{\ell,n}-S_{\ell,n}'}^2}$.
The observations in $\s_\ell$ are independent of $(\etahat_{-\ell},\etahat_{-\ell}')$, so, conditional on these models, the differences $\tilde f_{\etahat_{-\ell}}(W_j)-\tilde f_{\etahat_{-\ell}'}(W_j)$, $j\in\s_\ell$, are i.i.d.\ with mean zero, since $P\tilde f_\eta=0$ for every $\eta\in\Hc$.
Hence the cross terms of the square vanish, and fold symmetry and exchangeability within folds give $\E{\bp{S_{\ell,n}-S_{\ell,n}'}^2}=\gamma_n$, so $\E{A_{i,\ell}^2}=\gamma_n/2$.
By \cref{eq.clt_K.proof.delta}, the Cauchy--Schwarz inequality, $\E{f_i^2}=1$, $(K_n-1)^2/K_n\le K_n-1$, and \cref{ass.clt_K.degeneracy},
\begin{equation}
\E{\Delta_i^2}
\le
\frac{2}{nV_n}
+
\frac{(K_n-1)^2}{K_nV_n}\gamma_n
\le
\frac{2}{\ubar c}
\bc{\frac1n+(K_n-1)\gamma_n}.
\label{eq.clt_K.proof.delta_bound}
\end{equation}

\paragraph{Step 2. Kernel concentration.}
For $i=1,\dots,n$, let
$$
c_i
=
\Cov{f_i,\sum_{j=1}^n f_j\,\middle|\,\Dc_{-i}}
=
\sqrt{nV_n}\,\E{f_i\Delta_i\mid\Dc_{-i}},
$$
where the second equality uses $\E{f_i\mid\Dc_{-i}}=0$.
The term $j=i$ contributes $v_{k(i),n}$ by \cref{eq.clt_K.proof.conditional}, the terms $j\in\s_{k(i)}\setminus\bc{i}$ are $\Dc_{-i}$-measurable and contribute zero, and $\sum_{j\in\s_\ell}f_j=\sqrt{m_n}\,S_{\ell,n}$ for $\ell\ne k(i)$, so \cref{eq.clt_K.r} gives
\begin{equation}
c_i=v_{k(i),n}+(K_n-1)r_i.
\label{eq.clt_K.proof.c}
\end{equation}
Since $\sum_{i=1}^n v_{k(i),n}=m_n\sum_{k=1}^{K_n}v_{k,n}$, \cref{eq.clt_K.kernel} yields $\Kc_n=(nV_n)^{-1}\sum_{i=1}^n c_i$.
Since $\E{\Kc_n}=1$ and $\Var{X+Y}\le2\Var{X}+2\Var{Y}$,
\begin{equation}
\E{\abs{\Kc_n-1}}^2
\le
\Var{\Kc_n}
\le
\frac{2\Var{\bar v_n}}{V_n^2}
+
\frac{2(K_n-1)^2\Var{\bar r_n}}{V_n^2},
\label{eq.clt_K.proof.kernel}
\end{equation}
which converges to zero by \cref{ass.clt_K.scale,ass.clt_K.locality}.

\paragraph{Step 3. Stein remainder.}
Fix $h\in\Lip{1}$, and let $\varphi_h$ be the solution to \cref{eq.stein_eq}, so that $\norm{\varphi_h'}_\infty\le1$ and $\varphi_h'$ is $2$-Lipschitz.
As in \cref{thm.clt_2cf.stein_bound.reduction},
\begin{equation}
d_W\of{T_n,\Nc\of{0,1}}
\le
\E{\abs{\Kc_n-1}}
+
\sup_{h\in\Lip{1}}\abs{R_n(\varphi_h)},
\qquad
R_n(\varphi_h)=\E{T_n\varphi_h(T_n)}-\E{\Kc_n\varphi_h'(T_n)}.
\label{eq.clt_K.proof.reduction}
\end{equation}
For each $i$, since $\E{f_i\mid\Dc_{-i}}=0$ and $T_n-\Delta_i=\E{T_n\mid\Dc_{-i}}$ is $\Dc_{-i}$-measurable,
\begin{align*}
\E{f_i\varphi_h(T_n)}
&=
\E{f_i\bc{\varphi_h(T_n)-\varphi_h(T_n-\Delta_i)}}\\
&=
\E{f_i\Delta_i\varphi_h'(T_n-\Delta_i)}
+\varepsilon_{i,1}(h)\\
&=
\frac{1}{\sqrt{nV_n}}\E{c_i\varphi_h'(T_n-\Delta_i)}
+\varepsilon_{i,1}(h)\\
&=
\frac{1}{\sqrt{nV_n}}\E{c_i\varphi_h'(T_n)}
+\varepsilon_{i,1}(h)+\varepsilon_{i,2}(h),
\end{align*}
where the second and fourth equalities define the remainders $\varepsilon_{i,1}(h)$ and $\varepsilon_{i,2}(h)$.
Multiplying by $(nV_n)^{-1/2}$, summing over $i$, and using $\Kc_n=(nV_n)^{-1}\sum_{i=1}^n c_i$ gives
$$
\abs{R_n(\varphi_h)}
\le
\frac{1}{\sqrt{nV_n}}
\sum_{i=1}^n
\bc{\abs{\varepsilon_{i,1}(h)}+\abs{\varepsilon_{i,2}(h)}}.
$$

As in \cref{thm.clt_2cf.step2.taylor_bound}, the Taylor remainder satisfies
$$
\abs{\varepsilon_{i,1}(h)}
\le
\E{\abs{f_i}\min\{\Delta_i^2,2\abs{\Delta_i}\}}
\le
2\E{\abs{f_i}\abs{\Delta_i}\min\{\abs{\Delta_i},1\}}
\le
2\bar c\,\E{\Delta_i^2}^{1-1/p},
$$
where the last inequality uses H\"older's inequality with conjugate exponents $p$ and $q$, \cref{eq.clt_K.proof.moments}, and $\bp{\abs{d}\min\{\abs{d},1\}}^q\le d^2$ for $d\in\Rb$.

For the shift remainder, $\abs{\varphi_h'(x)-\varphi_h'(y)}\le2\min\{\abs{x-y},1\}$ and the $\Dc_{-i}$-measurability of $c_i$ give
$$
\abs{\varepsilon_{i,2}(h)}
\le
\frac{2}{\sqrt{nV_n}}
\E{\abs{c_i}\,\E{\min\{\abs{\Delta_i},1\}\mid\Dc_{-i}}}.
$$
Let $Y_i=\E{\Delta_i^2\mid\Dc_{-i}}$.
By the conditional Cauchy--Schwarz inequality and \cref{eq.clt_K.proof.conditional}, $\abs{c_i}\le\sqrt{nV_n}\,v_{k(i),n}^{1/2}Y_i^{1/2}$ and $\E{\min\{\abs{\Delta_i},1\}\mid\Dc_{-i}}\le\min\{Y_i,1\}^{1/2}$, so
$$
\abs{\varepsilon_{i,2}(h)}
\le
2\,\E{v_{k(i),n}^{1/2}G_i},
\qquad
G_i=\bp{Y_i\min\{Y_i,1\}}^{1/2}.
$$
Since $q\in[1,2)$, $G_i^q=Y_i^q\le Y_i$ when $Y_i\le1$, and $G_i^q=Y_i^{q/2}\le Y_i$ when $Y_i>1$.
H\"older's inequality with conjugate exponents $p$ and $q$ and \cref{eq.clt_K.proof.moments} then give
$$
\abs{\varepsilon_{i,2}(h)}
\le
2\norm{v_{k(i),n}^{1/2}}_p\E{Y_i}^{1-1/p}
\le
2\bar c\,\E{\Delta_i^2}^{1-1/p}.
$$
Both bounds are uniform over $h\in\Lip{1}$, so \cref{ass.clt_K.degeneracy,eq.clt_K.proof.delta_bound} give
\begin{equation}
\sup_{h\in\Lip{1}}\abs{R_n(\varphi_h)}
\le
4\bar c\sqrt{\frac{n}{V_n}}\,
\max_{i}\E{\Delta_i^2}^{1-1/p}
\le
C\sqrt n\bc{\frac1n+(K_n-1)\gamma_n}^{1-1/p},
\label{eq.clt_K.proof.stein}
\end{equation}
where $C=4\bar c\,\ubar c^{-1/2}(2/\ubar c)^{1-1/p}$.

\paragraph{Step 4. Conclusion.}
By \cref{eq.clt_K.proof.reduction,eq.clt_K.proof.kernel,eq.clt_K.proof.stein}, it remains to show that the right-hand side of \cref{eq.clt_K.proof.stein} vanishes.
Since $(a+b)^s\le a^s+b^s$ for $a,b\ge0$ and $s\in(0,1]$,
$$
\sqrt n\bc{\frac1n+(K_n-1)\gamma_n}^{1-1/p}
\le
n^{-\bp{\frac12-\frac1p}}
+
\bc{n^{\frac{1}{2(1-1/p)}}(K_n-1)\gamma_n}^{1-1/p},
$$
which converges to zero because $p>2$ and by \cref{ass.clt_K.stability}.
Hence $d_W\of{T_n,\Nc\of{0,1}}\to0$, which implies $T_n\leadsto\Nc\of{0,1}$.
\end{proof}

\section{Proofs}
\label{app.proofs}

Throughout this section, let $k(i)$ denote the fold not containing observation $i$, $i \in \stilde_{k(i)}$, so that $W_i$ evaluates the model $\etahat_{k(i)}$ trained on the other fold, and define 
$$f_i = f_{\etahat_{k(i)}}(W_i), \qquad \Delta_i = T_n - \E{T_n \mid \Dc_{-i}}, \qquad i = 1, \dots, n,$$
so that 
$$T_n = \frac{\sum_{i=1}^n f_i}{\sqrt{n \bp{1 + \rho_n}}}, \qquad \E{T_n} = 0, \qquad \Var{T_n} = 1.$$
Define also the approximate Stein kernel
$$\Kc_n = \frac{1 + \frac{2}{n} \sum_{i \in \s_1} r_i}{1 + \rho_n}.$$
For a function $g \colon \Rb \to \Rb$, let $\norm{g}_\infty = \sup_{x \in \Rb} \abs{g(x)}$. 

\subsection{Auxiliary Results}

\begin{lemma}[Stability Bounds]
\label{lem.clt_2cf.stability_bounds}
    It holds that 
    \begin{equation} \label{lem.clt_2cf.stability_bounds.variance}
        \E{\Var{\thetahat_{\etahat_1} \,\middle|\, \Dc_{-1}}} = \frac{\gamma_n}{n},
    \end{equation}
    and 
    \begin{equation} \label{lem.clt_2cf.stability_bounds.r}
        \Var{r_1} \le \E{r_1^2} \le \frac{n \gamma_n}{4}.
    \end{equation}
    Moreover, 
    \begin{equation} \label{lem.clt_2cf.stability_bounds.rho}
        \E{r_1} = \rho_n, 
        \qquad \text{and hence} \qquad 
        \abs{\rho_n} \le \frac{1}{2} \sqrt{n \gamma_n}.
    \end{equation}
\end{lemma}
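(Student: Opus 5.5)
The plan is to reduce everything to conditioning on $\Dc_{-1}$ and to use three facts repeatedly: $1\in\s_1$; the normalization $Pf_\eta=0$, $Pf_\eta^2=1$ for every $\eta\in\Hc$; and exchangeability within folds. Conditional on $\Dc_{-1}$, the model $\etahat_2=\Ac(\Dc_{\s_2})$ is fixed, and only $W_1$ and $\etahat_1$ vary. Hence $\thetahat_{\etahat_2}-\E{\thetahat_{\etahat_2}\mid\Dc_{-1}}=\frac{2}{n}f_{\etahat_2}(W_1)$, because $\E{f_{\etahat_2}(W_1)\mid\Dc_{-1}}=P f_{\etahat_2}=0$. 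Meanwhile, $\thetahat_{\etahat_1}=\frac2n\sum_{j\in\s_2}f_{\etahat_1}(W_j)$ depends on $W_1$ only through $\etahat_1$.

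For \cref{lem.clt_2cf.stability_bounds.variance}, I will use the replace-one representation of a conditional variance. Let $X=\thetahat_{\etahat_1}$ and let $X'$ be the same quantity computed with $\etahat_1'$, so that $\Var{X\mid\Dc_{-1}}=\frac12\E{(X-X')^2\mid\Dc_{-1}}$. Then
$$X-X'=\frac2n\sum_{j\in\s_2}\bc{f_{\etahat_1}(W_j)-f_{\etahat_1'}(W_j)}.$$
Conditional on $(\etahat_1,\etahat_1')$, the summands are i.i.d.\ with mean zero, since $Pf_\eta=0$ for every $\eta$ and the $W_j$, $j\in\s_2$, are independent of both models. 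The cross terms therefore vanish, and
$$\E{(X-X')^2}=\frac{4}{n^2}\cdot\frac n2\gamma_n=\frac{2\gamma_n}{n}.$$
Halving this gives $\gamma_n/n$.

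For \cref{lem.clt_2cf.stability_bounds.r}, I first simplify $r_1$. By the observation above,
$$r_1=\bp{\tfrac n2}^2\Cov{\thetahat_{\etahat_1},\thetahat_{\etahat_2}\mid\Dc_{-1}}=\frac n2\Cov{\thetahat_{\etahat_1},f_{\etahat_2}(W_1)\mid\Dc_{-1}}.$$
The conditional Cauchy--Schwarz inequality, together with $\E{f_{\etahat_2}(W_1)^2\mid\Dc_{-1}}=Pf_{\etahat_2}^2=1$, gives
$$r_1^2\le\frac{n^2}{4}\Var{\thetahat_{\etahat_1}\mid\Dc_{-1}}.$$
Taking expectations and applying \cref{lem.clt_2cf.stability_bounds.variance} yields $\E{r_1^2}\le nγ_n/4$. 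The bound $\Var{r_1}\le\E{r_1^2}$ is trivial.

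For \cref{lem.clt_2cf.stability_bounds.rho}, the conditional mean of $f_{\etahat_2}(W_1)$ is zero, so $\E{r_1}=\frac n2\E{\thetahat_{\etahat_1}f_{\etahat_2}(W_1)}$. The observations $W_i$, $i\in\s_1$, enter $\etahat_1$ symmetrically because $\Ac$ is permutation invariant, so this expectation is unchanged if $W_1$ is replaced by any $W_i$ with $i\in\s_1$. Averaging over $i\in\s_1$ gives
$$\E{r_1}=\frac n2\E{\thetahat_{\etahat_1}\thetahat_{\etahat_2}}.$$
Each $\sqrt{n/2}\,\thetahat_{\etahat_k}$ has mean zero and variance one: conditional on $\etahat_k$, it is a normalized sum of i.i.d.\ terms with mean $0$ and variance $1$. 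Hence the right-hand side equals $\rho_n$. Finally, $\abs{\rho_n}\le\E{\abs{r_1}}\le\E{r_1^2}^{1/2}\le\frac12\sqrt{n\gamma_n}$.

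The only delicate point is the exchangeability step in the identity for $\E{r_1}$. It needs $\etahat_2$ to be independent of $\Dc_{\s_1}$, and the joint law of $(\etahat_1,\{W_i\}_{i\in\s_1})$ to be invariant under permutations of $\s_1$. I will verify both explicitly from the i.i.d.\ sampling and the permutation invariance of $\Ac$.
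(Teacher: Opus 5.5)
Your proposal is correct and follows essentially the same route as the paper's proof. It uses the replace-one representation $\Var{X\mid\Dc_{-1}}=\frac12\E{(X-X')^2\mid\Dc_{-1}}$ with cross terms killed by $Pf_\eta=0$, conditional Cauchy--Schwarz with $\Var{f_{\etahat_2}(W_1)\mid\Dc_{-1}}=1$, and exchangeability within $\s_1$ to identify $\E{r_1}$ with $\frac n2\E{\thetahat_{\etahat_1}\thetahat_{\etahat_2}}=\rho_n$, followed by Jensen. Your reduction $r_1=\frac n2\Cov{\thetahat_{\etahat_1},f_{\etahat_2}(W_1)\mid\Dc_{-1}}$ only repackages the paper's Cauchy--Schwarz step, which instead carries the factor $\Var{\sqrt{n/2}\,\thetahat_{\etahat_2}\mid\Dc_{-1}}=2/n$.
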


\begin{proof}
Let
$$
\thetahat_{\etahat'_1}
=
\frac2n\sum_{j\in\s_2}f_{\etahat'_1}(W_j)
$$
denote the estimator $\thetahat_{\etahat_1}$ computed with $\etahat'_1$ in place of $\etahat_1$.
Then
\begin{align}
\E{\Var{\thetahat_{\etahat_1}\mid\Dc_{-1}}}
&=
\frac12
\E{
\E{
\bp{\thetahat_{\etahat_1}-\thetahat_{\etahat'_1}}^2
\mid\Dc_{-1}
}
}
\label{lem.clt_2cf.stability_bounds.replacement}\\
&=
\frac12
\E{\bp{\thetahat_{\etahat_1}-\thetahat_{\etahat'_1}}^2}
\nonumber\\
&=
\frac{2}{n^2}
\E{
\bp{
\sum_{j\in\s_2}
\bc{f_{\etahat_1}(W_j)-f_{\etahat'_1}(W_j)}
}^2
}
\nonumber\\
&=
\frac{2}{n^2}
\sum_{j\in\s_2}
\E{\bp{f_{\etahat_1}(W_j)-f_{\etahat'_1}(W_j)}^2}
\label{lem.clt_2cf.stability_bounds.evaluation}\\
&=
\frac{2}{n^2}\cdot\frac n2\,\gamma_n
=
\frac{\gamma_n}{n}.
\nonumber
\end{align}
\cref{lem.clt_2cf.stability_bounds.replacement} uses that, conditional on $\Dc_{-1}$, $\thetahat_{\etahat_1}$ and $\thetahat_{\etahat'_1}$ are independent and identically distributed, so that $\Var{X \mid \Dc_{-1}} = \frac12 \E{(X - X')^2 \mid \Dc_{-1}}$.
\cref{lem.clt_2cf.stability_bounds.evaluation} expands the square; the cross terms vanish because, conditional on $(\etahat_1, \etahat'_1)$, the differences $f_{\etahat_1}(W_j)-f_{\etahat'_1}(W_j)$ are independent across $j \in \s_2$ with mean $P f_{\etahat_1} - P f_{\etahat'_1} = 0$.
The last equality uses that $W_j$, $j \in \s_2$, is independent of $(\etahat_1, \etahat'_1)$ and $\abs{\s_2} = n/2$.

For \cref{lem.clt_2cf.stability_bounds.r}, the first inequality is $\Var{X} \le \E{X^2}$, and
\begin{align}
\E{r_1^2}
&\le
\frac{n^2}{4}\E{
\Var{\sqrt{n/2}\,\thetahat_{\etahat_1}\mid\Dc_{-1}}
\Var{\sqrt{n/2}\,\thetahat_{\etahat_2}\mid\Dc_{-1}}
}
\label{lem.clt_2cf.stability_bounds.cs}\\
&=
\frac{n^2}{4}\E{\Var{\thetahat_{\etahat_1}\mid\Dc_{-1}}}
\label{lem.clt_2cf.stability_bounds.cond_var}\\
&=
\frac{n\gamma_n}{4}.
\nonumber
\end{align}
\cref{lem.clt_2cf.stability_bounds.cs} is conditional Cauchy--Schwarz.
\cref{lem.clt_2cf.stability_bounds.cond_var} uses that $\etahat_2$ and $W_i$, $i \in \s_1 \setminus \bc{1}$, are $\Dc_{-1}$-measurable, so that only $f_{\etahat_2}(W_1)$ varies and, by the standardization,
$$
\Var{\sqrt{n/2}\,\thetahat_{\etahat_2}\mid\Dc_{-1}}
=
\frac n2\bp{\frac2n}^2\Var{f_{\etahat_2}(W_1)\mid\Dc_{-1}}
=
\frac2n.
$$
The last equality is \cref{lem.clt_2cf.stability_bounds.variance}.

For \cref{lem.clt_2cf.stability_bounds.rho}, 
\begin{align}
    \E{r_1} & = \frac{n^2}{4} \E{\Cov{\thetahat_{\etahat_1}, \thetahat_{\etahat_2} \;\middle|\; \Dc_{-1} }} \nonumber \\
    & = \frac{n}{2} \E{\Cov{f_{\etahat_2}(W_1), \thetahat_{\etahat_1} \;\middle|\; \Dc_{-1} }} \nonumber \\
    & = \frac{n}{2} \E{f_{\etahat_2}(W_1) \thetahat_{\etahat_1} } \label{lem.clt_2cf.stability_bounds.cross_moment} \\
    & = \frac{n}{2} \Cov{\thetahat_{\etahat_1}, \thetahat_{\etahat_2} } \nonumber \\
    & = \rho_n. \nonumber 
\end{align}
The second equality uses that $\etahat_2$ and $W_i$, $i \in \s_1 \setminus \bc{1}$, are $\Dc_{-1}$-measurable. 
\cref{lem.clt_2cf.stability_bounds.cross_moment} uses $\E{f_{\etahat_2}(W_1) \mid \Dc_{-1}} = P f_{\etahat_2} = 0$. 
The fourth equality uses exchangeability within $\s_1$ and $\E{\thetahat_{\etahat_2}} = 0$, and the last uses $\Var{\thetahat_{\etahat_k}} = 2/n$, which follows from the standardization. 
Finally, by Jensen's inequality and \cref{lem.clt_2cf.stability_bounds.r}, $\abs{\rho_n} = \abs{\E{r_1}} \le \sqrt{\E{r_1^2}} \le \frac{1}{2} \sqrt{n \gamma_n}$.
\end{proof}

\subsection{Proof of \Cref{lem.clt_2cf.sufficient}}

\begin{proof}
By exchangeability within $\s_1$, $\Var{r_1} = \Var{r_2}$, so $\Cov{r_1, r_2} = \Var{r_1} \Corr{r_1, r_2}$, where both sides are zero if $\Var{r_1} = 0$.
Hence, by \cref{lem.clt_2cf.stability_bounds.r} and $\abs{\Corr{r_1, r_2}} \le 1$,
$$\abs{\Cov{r_1, r_2}} \le \frac{n \gamma_n}{4} \abs{\Corr{r_1, r_2}} \le \frac{n \gamma_n}{4}.$$
Under (a), $n^{\frac{1}{2(1-1/p)}} \gamma_n = n^{-\frac{1-2/p}{2(1-1/p)}} \cdot n \gamma_n \to 0$ since $p > 2$, and the middle term is $O(1) \cdot o(1)$.
Under (b), the first condition is \Cref{ass.clt_2cf.stability} with $p = \infty$, and the middle term equals $\frac{1}{4} \sqrt{n} \gamma_n \cdot \sqrt{n} \abs{\Corr{r_1, r_2}} = o(1) \cdot O(1)$.
Under (c), $n^{\frac{1}{2(1-1/p)}} \gamma_n \le n \gamma_n \to 0$ since $\frac{1}{2(1-1/p)} \le 1$, and the right-hand side is $o(1)$.
\end{proof}

\subsection{A Stein Bound}

The following theorem is the common core of the proofs of \Cref{thm.clt_2cf,thm.clt_2cf.quantitative}. 
It uses only the setup of \Cref{sec.clt_2cf}, not \Cref{ass.clt_2cf}. 

\begin{theorem}[Stein Bound]
\label{thm.clt_2cf.stein_bound}
For every $n$, 
\begin{equation} \label{thm.clt_2cf.stein_bound.mean}
    \E{\Kc_n} = 1,
\end{equation}
\begin{equation} \label{thm.clt_2cf.stein_bound.kernel}
    \Var{\Kc_n} = \frac{2}{(1 + \rho_n)^2} \frac{\Var{r_1}}{n} + \frac{1}{(1 + \rho_n)^2} \bp{1 - \frac{2}{n}} \Cov{r_1, r_2},
\end{equation}
\begin{equation} \label{thm.clt_2cf.stein_bound.increment}
    \E{\Delta_1^2} = \frac{1 + 2\rho_n + \frac{n}{4} \gamma_n}{n (1 + \rho_n)},
\end{equation}
and 
\begin{equation} \label{thm.clt_2cf.stein_bound.bound}
    d_W\of{T_n, \Nc\of{0,1}} \le \sqrt{\Var{\Kc_n}} + \sqrt{\frac{n}{1+\rho_n}} \, \bs{\E{\abs{f_1} \min\{\Delta_1^2, 2 \abs{\Delta_1}\}} + 2 \E{\Delta_1^2}}.
\end{equation}
\end{theorem}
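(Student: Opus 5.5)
The plan is to prove the four displays in order, since the first three are exact moment calculations that feed into the last one, which is a Stein-method bound. Throughout I take $1,2\in\s_1$.

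For \cref{thm.clt_2cf.stein_bound.mean}, \Cref{lem.clt_2cf.stability_bounds} gives $\E{r_1}=\rho_n$. Exchangeability within $\s_1$ then gives $\E{r_i}=\rho_n$ for every $i\in\s_1$. Hence
$$\E{\Kc_n}=\frac{1+\frac2n\cdot\frac n2\rho_n}{1+\rho_n}=1.$$
For \cref{thm.clt_2cf.stein_bound.kernel}, $\frac2n\sum_{i\in\s_1}r_i$ is an average of $n/2$ exchangeable variables. Its variance is therefore $\frac2n\Var{r_1}+\bp{1-\frac2n}\Cov{r_1,r_2}$, and dividing by $(1+\rho_n)^2$ gives the stated formula.

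For \cref{thm.clt_2cf.stein_bound.increment}, I first write $\Delta_1$ explicitly. Conditional on $\Dc_{-1}$, the model $\etahat_2$ and the scores $f_j$, $j\in\s_1\setminus\bc{1}$, are fixed. Only $f_1=f_{\etahat_2}(W_1)$, whose conditional mean is zero, and $\sum_{j\in\s_2}f_j=\frac n2\thetahat_{\etahat_1}$ vary. So
$$\Delta_1=\frac{f_1+\frac n2\bp{\thetahat_{\etahat_1}-\E{\thetahat_{\etahat_1}\mid\Dc_{-1}}}}{\sqrt{n(1+\rho_n)}}.$$
Expanding the square produces three terms:
\begin{itemize}
\item $\E{f_1^2}=1$;
\item the cross term $2\cdot\frac n2\E{f_{\etahat_2}(W_1)\thetahat_{\etahat_1}}=2\rho_n$, by the computation in \Cref{lem.clt_2cf.stability_bounds} (the centering term drops because $\E{f_1\mid\Dc_{-1}}=0$);
\item $\frac{n^2}{4}\E{\Var{\thetahat_{\etahat_1}\mid\Dc_{-1}}}=\frac n4\gamma_n$, by \cref{lem.clt_2cf.stability_bounds.variance}.
\end{itemize}
Adding them gives \cref{thm.clt_2cf.stein_bound.increment}.

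For \cref{thm.clt_2cf.stein_bound.bound}, fix $h\in\Lip{1}$ and let $\varphi_h$ solve the Stein equation $\varphi'(x)-x\varphi(x)=h(x)-\E{h(Z)}$, with $\norm{\varphi_h'}_\infty\le1$ and $\varphi_h'$ $2$-Lipschitz. Then $d_W\le\sup_h\abs{\E{\varphi_h'(T_n)-T_n\varphi_h(T_n)}}$. I decompose
$$\E{T_n\varphi_h(T_n)}=(n(1+\rho_n))^{-1/2}\sum_i\E{f_i\varphi_h(T_n)}.$$
Since $T_n-\Delta_i$ is $\Dc_{-i}$-measurable and $\E{f_i\mid\Dc_{-i}}=0$, I can write $\E{f_i\varphi_h(T_n)}=\E{f_i\{\varphi_h(T_n)-\varphi_h(T_n-\Delta_i)\}}$. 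A Taylor expansion turns this into $\E{f_i\Delta_i\varphi_h'(T_n-\Delta_i)}$ plus a remainder $\varepsilon_{i,1}$ bounded by $\E{\abs{f_i}\min\{\Delta_i^2,2\abs{\Delta_i}\}}$. Conditioning on $\Dc_{-i}$ replaces $f_i\Delta_i$ by $c_i/\sqrt{n(1+\rho_n)}$, where
$$c_i=\Cov{f_i,\textstyle\sum_j f_j\mid\Dc_{-i}}=1+r_i.$$
Shifting the argument back from $T_n-\Delta_i$ to $T_n$ costs a remainder $\varepsilon_{i,2}$. For this one, $\abs{\varphi_h'(T_n)-\varphi_h'(T_n-\Delta_i)}\le2\abs{\Delta_i}$, and conditional Cauchy--Schwarz with $\E{f_i^2\mid\Dc_{-i}}=1$ gives $\abs{\E{f_i\Delta_i\mid\Dc_{-i}}}\cdot\E{\abs{\Delta_i}\mid\Dc_{-i}}\le\E{\Delta_i^2\mid\Dc_{-i}}$, so $\abs{\varepsilon_{i,2}}\le2\E{\Delta_i^2}$. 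By exchangeability all $i$ give the same bounds as $i=1$. Summing $n$ terms with the prefactor $(n(1+\rho_n))^{-1/2}$ produces the factor $\sqrt{n/(1+\rho_n)}$. What remains is the main term $\abs{\E{(\Kc^{\ast}_n-1)\varphi_h'(T_n)}}\le\E{\abs{\Kc^{\ast}_n-1}}$, where
$$\Kc^{\ast}_n=\frac{(1/n)\sum_i c_i}{1+\rho_n}=\frac{1+\frac1n\sum_{i=1}^nr_i}{1+\rho_n}$$
is the exact kernel.

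I expect the main obstacle to be this last step. The true kernel $\Kc^\ast_n$ averages $r_i$ over both folds, whereas $\Kc_n$ sums only over $\s_1$; for $i\in\s_2$, $r_i$ is to be read in its symmetric form $\frac n2\Cov{\sqrt{n/2}\,\thetahat_{\etahat_1},\sqrt{n/2}\,\thetahat_{\etahat_2}\mid\Dc_{-i}}$. I plan to handle the mismatch as in the reduction of \Cref{thm.clt_K}. Write $\Kc^\ast_n-1$ as the average of the two fold-level centered terms, whose $\s_1$ part is $\Kc_n-1$. By fold symmetry the $\s_2$ part has the same distribution as the $\s_1$ part. Minkowski's inequality then gives
$$\E{\abs{\Kc^\ast_n-1}}\le\norm{\Kc^\ast_n-1}_2\le\norm{\Kc_n-1}_2=\sqrt{\Var{\Kc_n}},$$
using $\E{\Kc_n}=1$ from \cref{thm.clt_2cf.stein_bound.mean}. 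Combining the three pieces yields \cref{thm.clt_2cf.stein_bound.bound}.
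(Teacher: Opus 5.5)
Your proposal is correct and follows the paper's argument: you use the same conditional expansion of $\Delta_1$ for the increment identity, the same leave-one-out Stein decomposition with coefficients $1+r_i$, and the same Taylor and shift remainder bounds. The only difference is bookkeeping. You sum over all $i$ and then compare the full kernel $\Kc_n^\ast$ with $\Kc_n$ via Minkowski, as in the $K_n$-fold proof. The paper instead uses fold symmetry and exchangeability to reduce to $i=1$ at the outset, so that $(1+\rho_n)^{-1}\E{(1+r_1)\varphi_h'(T_n)}=\E{\Kc_n\varphi_h'(T_n)}$ holds exactly, and the step you flag as the main obstacle never arises.
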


\begin{proof}

The proof uses Stein's method of normal approximation \citep{steinBoundErrorNormal1972,steinApproximateComputationExpectations1986}. 
Its starting point is that a random variable $Z$ follows a standard normal distribution, $Z \sim \Nc \of{0,1}$, if, and only if,
$$\E{Z \varphi(Z)} = \E{\varphi'(Z)}$$
for every $\varphi$ in a sufficiently rich class. 
This characterization suggests that if there is a random variable $\Kc_n$, often called an approximate Stein kernel, satisfying two properties, 
$$\E{T_n \varphi(T_n)} - \E{\Kc_n \varphi'(T_n)} \approx 0,$$
and 
$$\Kc_n \Lto 1,$$
then $T_n$ converges to a normal distribution. 
The main novelty of this proof is in constructing a specific $\Kc_n$ that satisfies both requirements under the locality and regularity conditions. 

First, for any $h \in \Lip{1}$, let $\varphi_h$ be the solution to Stein's equation 
\begin{equation} \label{eq.stein_eq}
    h(t) - \E{h(Z)} = \varphi'_h(t) - t \varphi_h(t), \qquad t \in \Rb,
\end{equation}
given by 
$$\varphi_h(t) = e^{t^2/2} \int_{-\infty}^t \bs{h(x) - \E{h(Z)}} e^{-x^2/2} \, \rmd x.$$
In particular, 
$\norm{\varphi_h'}_\infty \le 1$ and $\absx{\varphi_h'(x) - \varphi_h'(y)} \le 2 \absx{x - y}$ for all $x, y \in \Rb$ \citep[Lemma 4.2]{chatterjeeNewMethodNormal2008a}.
Second, define the Stein remainder
\begin{equation} \label{eq.remainder}
    R_n(\varphi_h) = \E{T_n \varphi_h (T_n)} - \E{\Kc_n \varphi_h'(T_n)}.
\end{equation} 
Combining \cref{eq.stein_eq}, \cref{eq.remainder}, and $\norm{\varphi_h'}_\infty \le 1$ yields the inequality 
\begin{align*}
    \abs{\E{h(T_n)} - \E{h(Z)}} & = \abs{\E{T_n \varphi_h(T_n)} - \E{\varphi'_h(T_n)}} \\ %
    & = \abs{\E{(1-\Kc_n) \varphi'_h(T_n)} - R_n(\varphi_h)} \\
    & \le \E{\abs{\Kc_n - 1}} + \abs{R_n(\varphi_h)}.
\end{align*}
Hence, 
\begin{equation} \label{thm.clt_2cf.stein_bound.reduction}
    d_W\of{T_n, \Nc\of{0,1}} \le \E{\abs{\Kc_n - 1}} + \sup_{h \in \Lip{1}} \abs{R_n(\varphi_h)}.
\end{equation}
The proof proceeds in two steps. 
Step 1 establishes \cref{thm.clt_2cf.stein_bound.mean,thm.clt_2cf.stein_bound.kernel} and bounds the first term in \cref{thm.clt_2cf.stein_bound.reduction}. 
Step 2 bounds the second term and establishes \cref{thm.clt_2cf.stein_bound.increment}. 

\paragraph{Step 1. Kernel Concentration.}

\begin{align}
    \E{\abs{\Kc_n - 1}}^2 & \le \Var{\Kc_n} \label{thm.clt_2cf.step1.1} \\
    & = \frac{4}{n^2 (1 + \rho_n)^2} \Var{\sum_{i \in \s_1} r_i} \nonumber \\
    & = \frac{4}{n^2 (1 + \rho_n)^2} \bp{\frac{n}{2} \Var{r_1} + \frac{n}{2} \bp{\frac{n}{2} - 1} \Cov{r_1, r_2}} \label{thm.clt_2cf.step1.2} \\
    & = \frac{2}{(1 + \rho_n)^2} \frac{\Var{r_1}}{n} + \frac{1}{(1 + \rho_n)^2} \bp{1 - \frac{2}{n}} \Cov{r_1, r_2}. \nonumber 
\end{align}

\cref{thm.clt_2cf.step1.1} follows since 
$$\E{\Kc_n} = (1 + \rho_n)^{-1} \bp{1 + \E{r_1}} = 1,$$
by exchangeability of observations within $\s_1$ and \cref{lem.clt_2cf.stability_bounds.rho}.
\cref{thm.clt_2cf.step1.2} uses exchangeability within $\s_1$. 
This proves \cref{thm.clt_2cf.stein_bound.mean,thm.clt_2cf.stein_bound.kernel}. 

\paragraph{Step 2. Vanishing Stein Remainder.}
Fix $h \in \Lip{1}$.
The key identity is 
\begin{align}
\E{T_n\varphi_h(T_n)}
&=
\sqrt{\frac{n}{1+\rho_n}}\,
\E{f_1\varphi_h(T_n)}
\label{thm.clt_2cf.step2.symmetry}\\
&=
\sqrt{\frac{n}{1+\rho_n}}\,
\E{f_1\bc{\varphi_h(T_n)-\varphi_h(T_n-\Delta_1)}}
\label{thm.clt_2cf.step2.centering}\\
&=
\sqrt{\frac{n}{1+\rho_n}}\,
\E{f_1\Delta_1\varphi_h'(T_n-\Delta_1)}
+\varepsilon_{n,1}(h)
\label{thm.clt_2cf.step2.linearization}\\
&=
\frac{1}{1+\rho_n}
\E{\bp{1+r_1}\varphi_h'(T_n-\Delta_1)}
+\varepsilon_{n,1}(h)
\label{thm.clt_2cf.step2.conditioning}\\
&=
\frac{1}{1+\rho_n}
\E{\bp{1+r_1}\varphi_h'(T_n)}
+\varepsilon_{n,1}(h)+\varepsilon_{n,2}(h)
\label{thm.clt_2cf.step2.shift}\\
&=
\E{\Kc_n\varphi_h'(T_n)}
+\varepsilon_{n,1}(h)+\varepsilon_{n,2}(h),
\label{thm.clt_2cf.step2.kernel}
\end{align}
where the deterministic errors $\varepsilon_{n,1}(h)$ and $\varepsilon_{n,2}(h)$ are defined by \cref{thm.clt_2cf.step2.linearization,thm.clt_2cf.step2.shift}, respectively.
Thus,
$$R_n(\varphi_h)=\varepsilon_{n,1}(h)+\varepsilon_{n,2}(h).$$

\cref{thm.clt_2cf.step2.symmetry} follows from fold symmetry and exchangeability within $\s_1$.
\cref{thm.clt_2cf.step2.centering} subtracts 
$$
\E{f_1\varphi_h(T_n-\Delta_1)}
=
\E{\E{f_1\mid\Dc_{-1}} \varphi_h(T_n-\Delta_1)}
=0.
$$
\cref{thm.clt_2cf.step2.linearization} defines the Taylor expansion remainder $\varepsilon_{n,1}(h)$. 
\cref{thm.clt_2cf.step2.conditioning} uses %
\begin{align}
\sqrt{n(1+\rho_n)}\,\E{f_1\Delta_1\,\middle|\,\Dc_{-1}}
&=
\E{
f_1\bc{
\sum_{j=1}^n f_j
-\E{\sum_{j=1}^n f_j\,\middle|\,\Dc_{-1}}
}
\,\middle|\,\Dc_{-1}
}
\nonumber\\
&=
\Cov{f_1,\sum_{j=1}^n f_j\,\middle|\,\Dc_{-1}}
\nonumber\\
&=
1+\frac n2
\Cov{f_1,\thetahat_{\etahat_1}\,\middle|\,\Dc_{-1}}
\nonumber\\
&=
1+r_1.
\label{thm.clt_2cf.step2.coefficient}
\end{align}
\cref{thm.clt_2cf.step2.shift} defines the remainder $\varepsilon_{n,2}(h)$, and \cref{thm.clt_2cf.step2.kernel} follows from exchangeability within $\s_1$ and the definition of $\Kc_n$.

To bound $\varepsilon_{n,1}(h)$, note that, for all $t,d\in\Rb$, 
\begin{align*}
\abs{\varphi_h(t)-\varphi_h(t-d)-d\varphi_h'(t-d)}
&=
\abs{d}\,
\abs{
\int_0^1
\bs{\varphi_h'(t-d+sd)-\varphi_h'(t-d)}
\,\rmd s
}\\
&\le
\abs{d}\int_0^1
\min\{2s\abs{d},2\}\,\rmd s\\
&\le
\min\{d^2,2\abs{d}\}\\
&\le
2\abs{d}\min\{\abs{d},1\},
\end{align*}
where the first inequality uses the $2$-Lipschitz property of $\varphi_h'$ and the bound $\norm{\varphi_h'}_\infty\le1$.
Consequently,
\begin{align}
\abs{\varepsilon_{n,1}(h)}
&=
\sqrt{\frac{n}{1+\rho_n}}\,
\abs{
\E{f_1\bc{
\varphi_h(T_n)-\varphi_h(T_n-\Delta_1)
-\Delta_1\varphi_h'(T_n-\Delta_1)
}}
}
\nonumber\\
&\le
\sqrt{\frac{n}{1+\rho_n}}\,
\E{\abs{f_1}\min\{\Delta_1^2,2\abs{\Delta_1}\}}
\label{thm.clt_2cf.step2.taylor_bound}
\end{align}

For the remainder $\varepsilon_{n,2}(h)$, since $\varphi_h'$ is $2$-Lipschitz,
\begin{align}
\abs{\varepsilon_{n,2}(h)}
&=
\frac{1}{1+\rho_n}
\abs{
\E{\bp{1+r_1}
\bc{\varphi_h'(T_n-\Delta_1)-\varphi_h'(T_n)}}
}
\nonumber\\
&\le
\frac{2}{1+\rho_n}
\E{\abs{1+r_1}\abs{\Delta_1}}
\nonumber\\
&\le
2\sqrt{\frac{n}{1+\rho_n}}\,
\E{\Delta_1^2}.
\label{thm.clt_2cf.step2.shift_bound}
\end{align}
The last inequality follows from Cauchy--Schwarz, since \cref{thm.clt_2cf.step2.coefficient} and conditional Cauchy--Schwarz give
\begin{align*}
\E{\bp{1+r_1}^2}
&=
n(1+\rho_n)\,
\E{\E{f_1\Delta_1\mid\Dc_{-1}}^2}\\
&\le
n(1+\rho_n)\,
\E{
\E{f_1^2\mid\Dc_{-1}}
\E{\Delta_1^2\mid\Dc_{-1}}
}\\
&=
n(1+\rho_n)\E{\Delta_1^2},
\end{align*}
where $\E{f_1^2\mid\Dc_{-1}}=1$ by the standardization.

Since both error bounds are uniform over $h\in\Lip{1}$, combining \cref{thm.clt_2cf.step2.taylor_bound,thm.clt_2cf.step2.shift_bound} gives
\begin{align}
\sup_{h\in\Lip{1}}\abs{R_n(\varphi_h)}
&=
\sup_{h\in\Lip{1}}
\abs{\varepsilon_{n,1}(h)+\varepsilon_{n,2}(h)}
\nonumber\\
&\le
\sup_{h\in\Lip{1}}\abs{\varepsilon_{n,1}(h)}
+
\sup_{h\in\Lip{1}}\abs{\varepsilon_{n,2}(h)}
\nonumber\\
&\le
\sqrt{\frac{n}{1+\rho_n}}\,
\bs{
\E{\abs{f_1}\min\{\Delta_1^2,2\abs{\Delta_1}\}}
+
2\E{\Delta_1^2}
}.
\label{thm.clt_2cf.step2.remainder_bound}
\end{align}
Together with \cref{thm.clt_2cf.step1.1} and \cref{thm.clt_2cf.stein_bound.reduction}, this proves \cref{thm.clt_2cf.stein_bound.bound}. 

It remains to establish \cref{thm.clt_2cf.stein_bound.increment}.
Since the scores of all other fold-1 observations are $\Dc_{-1}$-measurable and $\E{f_1\mid\Dc_{-1}}=0$,
$$
\Delta_1
=
\frac{
f_1+\frac n2
\bc{
\thetahat_{\etahat_1}
-\E{\thetahat_{\etahat_1}\mid\Dc_{-1}}
}
}{
\sqrt{n(1+\rho_n)}
}.
$$
Expanding the square,
\begin{align*}
\E{\Delta_1^2}
&=
\frac{1}{n(1+\rho_n)}
\bs{
\E{f_1^2}
+ n \E{f_1 \bc{\thetahat_{\etahat_1} - \E{\thetahat_{\etahat_1}\mid\Dc_{-1}}}}
+ \frac{n^2}{4} \E{\Var{\thetahat_{\etahat_1}\mid\Dc_{-1}}}
}
\\
&=
\frac{1 + 2\rho_n + \frac{n}{4}\gamma_n}{n(1+\rho_n)}.
\end{align*}
The second equality uses $\E{f_1^2} = 1$, the identity $\E{f_1 \thetahat_{\etahat_1}} = 2\rho_n/n$ from \cref{lem.clt_2cf.stability_bounds.cross_moment}, $\E{f_1 \E{\thetahat_{\etahat_1}\mid\Dc_{-1}}} = \E{\E{f_1\mid\Dc_{-1}} \E{\thetahat_{\etahat_1}\mid\Dc_{-1}}} = 0$, and \cref{lem.clt_2cf.stability_bounds.variance}.
\end{proof}

\subsection{Proof of \Cref{thm.clt_2cf}}
\label{app.proofs.clt_2cf}

\begin{proof}
The main steps of the proof are first established in \Cref{thm.clt_2cf.stein_bound}. 
After that, it suffices to show that $\Var{\Kc_n} \to 0$ and that $\sqrt{n} \, \bs{\E{\abs{f_1} \min\{\Delta_1^2, 2 \abs{\Delta_1}\}} + 2 \E{\Delta_1^2}} \to 0$. 

For the first claim, \cref{thm.clt_2cf.stein_bound.kernel} and \cref{lem.clt_2cf.stability_bounds.r} give
$$
\Var{\Kc_n} \le \frac{\gamma_n}{2 (1 + \rho_n)^2} + \frac{\abs{\Cov{r_1, r_2}}}{(1 + \rho_n)^2} \longrightarrow 0,
$$
since $1 + \rho_n$ is bounded away from zero by \Cref{ass.clt_2cf.degeneracy}, $\gamma_n \to 0$ by \Cref{ass.clt_2cf.stability}, and $\Cov{r_1, r_2} \to 0$ by \Cref{ass.clt_2cf.locality}.

For the second claim, first note that \cref{thm.clt_2cf.stein_bound.increment}, $\rho_n \le 1$, and \Cref{ass.clt_2cf.degeneracy} give
\begin{equation} \label{thm.clt_2cf.step2.increment_bound}
    \E{\Delta_1^2} \le \frac{2}{n} + \frac{\gamma_n}{4 \ubar{c}} \le C_0 \bp{\frac{1}{n} + \gamma_n}, \qquad C_0 = \max\bc{2, \frac{1}{4 \ubar{c}}}.
\end{equation}
Next, let $q = 1/(1-1/p) \in [1, 2)$ denote the conjugate exponent of $p$.
By H\"older's inequality,
\begin{align*}
\E{\abs{f_1}\min\{\Delta_1^2,2\abs{\Delta_1}\}}
&\le
2\E{\abs{f_1}\abs{\Delta_1}\min\{\abs{\Delta_1},1\}}\\
&\le
2\bar c\,
\E{\Delta_1^2}^{1-1/p},
\end{align*}
where the last inequality uses conjugate exponents $p$ and $q$, $\norm{f_1}_p = \norm{f_{\etahat_1}(W)}_p \le \bar c$ by fold symmetry and \Cref{ass.clt_2cf.moments}, and
$$
\bp{\abs{d}\min\{\abs{d},1\}}^{q}
\le d^2,
$$
which holds for all $d \in \Rb$ since $q \in [1, 2)$.
Combining these bounds gives
\begin{align*}
\sqrt{n} \, \bs{\E{\abs{f_1} \min\{\Delta_1^2, 2 \abs{\Delta_1}\}} + 2 \E{\Delta_1^2}}
&\le
2\bp{\bar c+1}
\sqrt{n}\,
\E{\Delta_1^2}^{1-1/p}\\
&\le
C\sqrt n
\bp{\frac1n+\gamma_n}^{1-1/p}\\
&=
C\bs{
n^{-\frac{1-2/p}{2(1-1/p)}}
+
n^{\frac{1}{2(1-1/p)}}\gamma_n
}^{1-1/p}
\longrightarrow0,
\end{align*}
for some constant $C<\infty$ depending only on $\bar c$, $\ubar c$, and $p$.
The first inequality uses 
$$
\E{\Delta_1^2}\le\E{T_n^2}=1
\quad\Longrightarrow\quad
\E{\Delta_1^2}
\le
\E{\Delta_1^2}^{1-1/p},
$$
where $\E{\Delta_1^2}\le\E{T_n^2}$ follows because conditional expectation is an $L^2$ projection.
The second inequality uses \cref{thm.clt_2cf.step2.increment_bound}. 
The convergence follows from $p>2$ and \Cref{ass.clt_2cf.stability}.
Since $1 + \rho_n \ge \ubar{c}$, the factor $(1+\rho_n)^{-1/2}$ in \cref{thm.clt_2cf.stein_bound.bound} is bounded, which completes the proof.
\end{proof}

\subsection{Proof of \Cref{thm.clt_2cf.quantitative}}

\begin{proof}
Let $M < \infty$ be such that $n^{1/(1-1/p)} \gamma_n \le M$ and $n \abs{\Cov{r_1, r_2}} \le M$ for all $n$.
Since $1/(1-1/p) \ge 1$, also $n \gamma_n \le M$.
Throughout, $C < \infty$ denotes a constant depending only on $\bar c$, $\ubar c$, $p$, and $M$, whose value may change from line to line.
By \cref{thm.clt_2cf.stein_bound.bound} and $1 + \rho_n \ge \ubar c$, it suffices to show that each of $\Var{\Kc_n}$, $\sqrt{n} \E{\Delta_1^2}$, and $\sqrt{n} \E{\abs{f_1} \min\{\Delta_1^2, 2\abs{\Delta_1}\}}$ is at most $C/n$, $C/\sqrt{n}$, and $C/\sqrt{n}$, respectively.

First, by \cref{thm.clt_2cf.stein_bound.kernel}, \cref{lem.clt_2cf.stability_bounds.r}, and \Cref{ass.clt_2cf.degeneracy},
$$
\Var{\Kc_n}
\le
\frac{\gamma_n}{2 \ubar c^2} + \frac{\abs{\Cov{r_1, r_2}}}{\ubar c^2}
\le
\frac{M}{2 \ubar c^2 n} + \frac{M}{\ubar c^2 n}
\le \frac{C}{n}.
$$

Second, by \cref{thm.clt_2cf.stein_bound.increment}, $\rho_n \le 1$, and \Cref{ass.clt_2cf.degeneracy},
\begin{equation} \label{thm.clt_2cf.quantitative.increment}
    \E{\Delta_1^2} \le \frac{2}{n} + \frac{\gamma_n}{4 \ubar c} \le \frac{2}{n} + \frac{M}{4 \ubar c \, n} \le \frac{C}{n}.
\end{equation}

Third, let $\phi(x) = \min\{x^2, 2\abs{x}\}$ and decompose $\Delta_1 = A + B$, where
$$
A = \frac{f_1}{\sqrt{n(1+\rho_n)}},
\qquad
B = \frac{\frac{n}{2} \bc{\thetahat_{\etahat_1} - \E{\thetahat_{\etahat_1} \mid \Dc_{-1}}}}{\sqrt{n(1+\rho_n)}}.
$$
Since $\phi$ is even, nondecreasing in $\abs{x}$, and satisfies $\phi(2x) \le 4\phi(x)$, for all $a, b \in \Rb$,
$$
\phi(a + b) \le \phi\of{2 \max\{\abs{a}, \abs{b}\}} \le 4 \max\{\phi(a), \phi(b)\} \le 4 \phi(a) + 4 \phi(b),
$$
and hence
\begin{equation} \label{thm.clt_2cf.quantitative.split}
    \E{\abs{f_1} \phi(\Delta_1)} \le 4 \E{\abs{f_1} \phi(A)} + 4 \E{\abs{f_1} \phi(B)}.
\end{equation}
For the first term in \cref{thm.clt_2cf.quantitative.split}, $\phi(A) \le A^2$ and Lyapunov's inequality with $p \ge 3$ give
$$
\E{\abs{f_1} \phi(A)}
\le
\frac{\E{\abs{f_1}^3}}{n(1+\rho_n)}
\le
\frac{\bar c^3}{\ubar c \, n}.
$$
For the second term, set $\alpha = 1 - 2/p \in [1/3, 1]$ and $q = 1/(1-1/p) \in [1, 3/2]$, so that $(1+\alpha) q = 2$ and $1/q = 1 - 1/p$.
Since $\min\{x^2, 2\abs{x}\} \le (x^2)^{\alpha} (2\abs{x})^{1-\alpha} = 2^{1-\alpha} \abs{x}^{1+\alpha}$, H\"older's inequality with conjugate exponents $p$ and $q$ gives
$$
\E{\abs{f_1} \phi(B)}
\le
2^{1-\alpha} \norm{f_1}_p \E{\abs{B}^{(1+\alpha) q}}^{1/q}
=
2^{2/p} \norm{f_1}_p \E{B^2}^{1 - 1/p}.
$$
By \cref{lem.clt_2cf.stability_bounds.variance} and \Cref{ass.clt_2cf.degeneracy},
$$
\E{B^2}
=
\frac{n}{4(1+\rho_n)} \E{\Var{\thetahat_{\etahat_1} \mid \Dc_{-1}}}
=
\frac{\gamma_n}{4(1+\rho_n)}
\le
\frac{\gamma_n}{4 \ubar c},
$$
and therefore, since $\norm{f_1}_p = \norm{f_{\etahat_1}(W)}_p \le \bar c$ by fold symmetry and $\gamma_n \le M n^{-1/(1-1/p)}$,
$$
\E{\abs{f_1} \phi(B)}
\le
2^{2/p} \bar c \bp{\frac{\gamma_n}{4 \ubar c}}^{1 - 1/p}
\le
C \gamma_n^{1 - 1/p}
\le
\frac{C}{n}.
$$
Substituting both bounds into \cref{thm.clt_2cf.quantitative.split} yields $\sqrt{n} \E{\abs{f_1} \phi(\Delta_1)} \le C / \sqrt{n}$.

Combining the three bounds with \cref{thm.clt_2cf.stein_bound.bound},
$$
d_W\of{T_n, \Nc\of{0,1}}
\le
\sqrt{\frac{C}{n}} + \frac{1}{\sqrt{\ubar c}} \bs{\frac{C}{\sqrt n} + \frac{2C}{\sqrt n}}
=
O(n^{-1/2}).
$$
\end{proof}

\subsection{Proof of \Cref{thm.inference}}
\label{app.proofs.inference}

\begin{proof}
Write $Z_k^\circ$ for $Z_{k,1}$ with its denominator replaced by $\sqrt m\,\sigma_{\etahat_{k,1}}$.
After subtracting the fitted population mean from each score, write $Z_k^\circ=U_k-V_k$, where $U_k$ and $V_k$ are the primary and reference normalized sums.
Unconditionally, the four quarters are independent i.i.d. samples.
Conditional on the two training quarters, $U_1,U_2$ are fixed, while $V_1,V_2$ are independent, centered, and have variance one.
Therefore,
\begin{equation}
\E{Z_1^\circ Z_2^\circ}=\rho_{n/2},
\qquad
\E{(Z_k^\circ)^2}=2.
\label{eq.inference.proof.oracle}
\end{equation}

For fixed values $x_1,\dots,x_{2m}$, let $Z(x)$ denote the contrast in \cref{eq.inference.contrast} under a uniform balanced split.
If $v(x)$ is the centered vector divided by its Euclidean norm $\norm{\cdot}_2$, with $v(x)=0$ for a constant vector, then
$$
Z(x)=\sqrt{\frac{2m-1}{m}}\,\epsilon^\top v(x),
$$
where $\epsilon_i$ is $1$ on the selected half and $-1$ on its complement.
The balanced-sign identities
$$
\E{\epsilon_i}=0,
\qquad
\E{\epsilon_i\epsilon_j}=-\frac1{2m-1},
\qquad
\E{\epsilon_i\epsilon_j\epsilon_\ell\epsilon_r}
=
\frac3{(2m-1)(2m-3)}
$$
for distinct indices give
\begin{equation}
\E{Z(x)}=0,
\qquad
\E{Z(x)^2}=2\ind{v(x)\ne0},
\qquad
\E{Z(x)^4}\le12.
\label{eq.inference.proof.balanced}
\end{equation}
The case $m=1$ follows directly.

Conditional on the training quarter, the pooled test observations are i.i.d.
The moment condition gives uniform integrability of the squared normalized scores, so truncation and the conditional law of large numbers imply
$$
\E{\abs{
\frac{\sigmahat_{\etahat_{k,1}}^2}
{\sigma_{\etahat_{k,1}}^2}-1
}}\to0.
$$
Conditioning on the pooled scores and randomizing their balanced split then gives
\begin{align*}
\E{(Z_{k,1}-Z_k^\circ)^2}
&\le
2\E{\bp{
\frac{\sigmahat_{\etahat_{k,1}}}
{\sigma_{\etahat_{k,1}}}-1
}^2}\\
&\le
2\E{\abs{
\frac{\sigmahat_{\etahat_{k,1}}^2}
{\sigma_{\etahat_{k,1}}^2}-1
}}
\to0.
\end{align*}
When the sample variance is zero, both contrasts are zero.
By Cauchy--Schwarz and \cref{eq.inference.proof.oracle},
$$
\E{Z_{1,1}Z_{2,1}}=\rho_{n/2}+o(1).
$$

Put $q_n=\E{Z_{1,1}Z_{2,1}\mid\Dc}$.
Applying \cref{eq.inference.proof.balanced} conditionally on the data and the training subset gives
$$
\E{Z_{k,1}^2\mid\Dc}\le2,
\qquad
\E{Z_{k,1}^4\mid\Dc}\le12,
$$
and hence $|q_n|\le2$ and $\Var{Z_{1,1}Z_{2,1}\mid\Dc}\le12$.
Conditional independence of the repetitions and the fact that projection onto $[-1,1]$ cannot increase the distance to $\rho_n$ yield
\begin{align*}
\E{(\wh\rho_n-\rho_n)^2}
&\le
\frac{12}{B}
+
\Var{q_n}
+
\bc{\E{q_n}-\rho_n}^2\\
&=
\frac{12}{B}
+
\Var{q_n}
+
\bc{\rho_{n/2}-\rho_n+o(1)}^2
\to0,
\end{align*}
where the last term vanishes because $\rho_{n/2}$ and $\rho_n$ both converge to $\rho$.

The same moment argument gives
$\sigmahat_{\etahat_k}/\sigma_{\etahat_k}\pto1$
for the original fits.
Since $\rho>-1$, $1+\rho_n$ is bounded away from zero for all large $n$.
Consequently, \cref{eq.inference.scale_balance} and correlation consistency imply
\begin{align*}
\frac{\thetahat_\etahat-\theta_\etahat}
{\wh{\se}\of{\thetahat_\etahat}}
&=
\frac{
\sqrt{n/2}\sum_{k=1}^2
(\thetahat_{\etahat_k}-\theta_{\etahat_k})/\sigma_{\etahat_k}
}{
\sqrt{2(1+\rho_n)}
}
+o_p(1)\\
&\leadsto\Nc\of{0,1},
\end{align*}
where the last step uses \Cref{thm.clt_2cf}.
Continuity of the normal distribution proves the coverage statement.
\end{proof}

\section{Inference for the BLP Example}
\label{app.inference.blp}

For the OLS estimator in \Cref{ex.setup.blp}, use the slope influence score
$$
\phi_\eta(y,x)
=
\frac{
\bc{\eta(x)-P\eta}
\bs{y-\E{Y}-\theta_\eta\bc{\eta(x)-P\eta}}
}{
P(\eta-P\eta)^2
}.
$$
Under the zero-slope null this equals the score in \Cref{ex.setup.blp}.
Away from the null, the residual term also accounts for estimation of the prediction variance.

For each refit, define
\begin{align*}
(\widehat\alpha_{k,b},\betahat_{k,b})
&=
\argmin_{a,t}\sum_{i\in\stilde_k}
\bc{Y_i-a-t\etahat_{k,b}(X_i)}^2,\\
\widehat z_{k,b,i}
&=
\etahat_{k,b}(X_i)
-
\frac1{2m}\sum_{j\in\stilde_k}\etahat_{k,b}(X_j),\\
\widehat e_{k,b,i}
&=
Y_i-\widehat\alpha_{k,b}
-\betahat_{k,b}\etahat_{k,b}(X_i),\\
\widehat\phi_{k,b,i}
&=
\frac{
\widehat z_{k,b,i}\widehat e_{k,b,i}
}{
(2m)^{-1}\sum_{j\in\stilde_k}\widehat z_{k,b,j}^2
}.
\end{align*}
Use $\widehat\phi_{k,b,i}$ in place of $f_{\etahat_{k,b}}(W_i)$ in \cref{eq.inference.pooled_scale,eq.inference.contrast}.
Its common positive denominator cancels from $Z_{k,b}$, so the implementation may equivalently use $\widehat z_{k,b,i}\widehat e_{k,b,i}$.
These pooled quantities do not depend on the test fold's primary/reference labels.

The correlation result extends provided
$$
\frac{
\sum_{i\in\stilde_k}
\bc{\widehat\phi_{k,1,i}-\phi_{\etahat_{k,1}}(W_i)}^2
}{
2m\,P\phi_{\etahat_{k,1}}^2
}
\pto0,
$$
with the theorem's normalized-score assumptions applied to $\phi_\eta$.
Indeed, with $Z(x)$ and $v(x)$ as in the proof of \Cref{thm.inference} in \Cref{app.proofs.inference}, for two pooled score vectors $x,y$, the balanced contrast satisfies
$$
\E{\bc{Z(x)-Z(y)}^2}
=
2\norm{v(x)-v(y)}_2^2.
$$
Relative score consistency makes the normalized vectors converge in probability.
Their squared distance is bounded by four, giving convergence in mean square of the contrasts and leaving the correlation proof unchanged.

For coverage using the original OLS estimators, also require
$$
\betahat_k-\theta_{\etahat_k}
=
\frac2n\sum_{i\in\stilde_k}\phi_{\etahat_k}(W_i)
+
o_p\of{\sqrt{\frac{P\phi_{\etahat_k}^2}{n}}},
$$
and relative consistency of their marginal standard errors.
In this application, the scale-balance condition concerns the slope influence scales $\sqrt{P\phi_{\etahat_k}^2}$.

\end{appendices}

\end{document}